\documentclass[a4paper,USenglish,cleveref, autoref, thm-restate,authorcolumns, pagebackref]{lipics-v2019}

  \usepackage{amsthm,amsmath,amssymb,graphicx,dsfont}
  \usepackage[capitalise]{cleveref}
  \usepackage{enumitem}
  \usepackage{xspace}

  \crefname{enumi}{Property}{Properties}

  \usepackage{mathtools}
  \usepackage{subcaption}
  \captionsetup[subfigure]{list=true, font=large, labelfont=bf, labelformat=brace, position=top}
 
\usepackage[numbers]{natbib}

\usepackage{dsfont}
\usepackage{tikz}
\usetikzlibrary{graphs,calc,backgrounds,fit,positioning}
\tikzstyle{vertex}=[draw, circle, fill, inner sep = 2pt]

\crefname{subsection}{Subsection}{Subsections}
\Crefname{subsection}{Subsection}{Subsections}

\newcommand{\Nat}{\mathbb{N}}

\newcommand{\Wilog}{Without loss of generality}
\newcommand{\wilog}{without loss of generality}

\newcommand{\Ff}{\mathcal{F}}
\newcommand{\Gg}{\mathcal{G}}
\newcommand{\Ss}{\mathcal{S}}
\newcommand{\Xx}{\mathcal{X}}
\newcommand{\Pp}{\mathcal{P}}
\newcommand{\wx}[1]{\mathrm{W[#1]}\xspace}
\newcommand{\wone}{\ensuremath{\wx{1}}}
\newcommand{\wtwo}{\ensuremath{\wx{2}}}
\newcommand{\fpt}{\ensuremath{\mathrm{FPT}}}
\newcommand{\p}{\ensuremath{\mathrm{P}}}
\newcommand{\np}{\ensuremath{\mathrm{NP}}}

\newtheorem*{problem}{Problem}

\DeclareMathOperator{\poly}{poly}

\DeclarePairedDelimiter{\abs}{\lvert}{\rvert}

\bibliographystyle{plainurl}

\title{Multistage Graph Problems on a Global Budget}
\authorrunning{Multistage Graph Problems on a Global Budget }

\newcommand{\tuaddress}{TU Berlin, Faculty IV, Algorithmics and Computational Complexity, Germany}
\author{Klaus Heeger}{\tuaddress}{heeger@tu-berlin.de}{https://orcid.org/0000-0001-8779-0890}{Funded by  DFG Research Training Group 2434, ``Facets of Complexity''.}
\author{Anne-Sophie Himmel}{\tuaddress}{anne-sophie.himmel@tu-berlin.de}{https://orcid.org/0000-0001-7905-7904}{Funded by DFG, project FPTinP (NI 369/16).}
\author{Frank Kammer}{THM, University of Applied Sciences Mittelhessen,
Giessen, Germany}{frank.kammer@mni.thm.de}{https://orcid.org/0000-0002-2662-3471}{}
\author{Rolf Niedermeier}{\tuaddress}{rolf.niedermeier@tu-berlin.de}{https://orcid.org/0000-0003-1703-1236}{}
\author{Malte Renken}{\tuaddress}{m.renken@tu-berlin.de}{https://orcid.org/0000-0002-1450-1901}{Funded by the DFG, project MATE (NI 369/17).}
\author{Andrej Sajenko}{THM, University of Applied Sciences Mittelhessen, 
Giessen, Germany}{andrej.sajenko@mni.thm.de}{https://orcid.org/0000-0001-5946-8087}{Funded by the DFG -- 379157101.}

\authorrunning{K. Heeger, A.-S. Himmel, F. Kammer, R. Niedermeier, M. Renken, A. Sajenko}

\Copyright{Klaus Heeger, Anne-Sophie Himmel, Frank Kammer, Rolf Niedermeier, Malte Renken, Andrej Sajenko}

\ccsdesc[500]{Mathematics of computing~Graph algorithms}
\ccsdesc[500]{Theory of computation~Design and analysis of algorithms~Parameterized complexity and exact algorithms}


\keywords{Temporal graphs, time-varying graphs, NP-hard problems, Vertex Cover, Matching, parameterized complexity,
Cluster Editing}

\category{}



\funding{The work started while Frank Kammer and Andrej Sajenko were visiting TU Berlin. 
}


\nolinenumbers
\hideLIPIcs

\EventEditors{}
\EventNoEds{0}
\EventLongTitle{}
\EventShortTitle{}
\EventAcronym{}
\EventYear{}
\EventDate{}
\EventLocation{}
\EventLogo{}
\SeriesVolume{}
\ArticleNo{}

\begin{document}

\maketitle
\begin{abstract}
Time-evolving or temporal graphs gain more and more popularity when 
studying the behavior of complex networks. In this context, the multistage 
view on computational problems is among the most natural frameworks. Roughly speaking, herein one studies the different 
(time) layers of a temporal graph (effectively meaning that the edge set
may change over time, but the vertex set remains unchanged), and one searches for a solution of 
a given graph problem for each layer. The twist in the multistage setting is that
the solutions found must not differ too much between subsequent layers.
We relax on this already established notion by introducing a global instead of the local budget view studied so far.
More specifically, we allow for few disruptive changes between subsequent layers
but request that overall, that is, summing over all layers, the degree of
change is moderate.
Studying several classical graph problems (both NP-hard and polynomial-time solvable 
ones) from a parameterized complexity angle, we encounter both fixed-parameter tractability and
parameterized hardness results. Somewhat surprisingly, we find that sometimes the 
global multistage versions of NP-hard problems such as \textsc{Vertex Cover} 
turn out to be computationally more tractable than the ones of polynomial-time 
solvable problems such as \textsc{Matching}.
\end{abstract}

\section{Introduction}
Recognizing the need to address the continuous evolution of networks 
and the steady demand for maintenance
due to instances changing over time, in~2014 Eisenstat et al.~\cite{EisenstatMS14} and Gupta 
et al.~\cite{GuptaTW14} introduced 
what is now known as the ``multistage view'' on combinatorial optimization problems.
Focusing on graphs, roughly speaking the idea is to consider a series of
graphs over a fixed vertex set and changing edge set (this is
known as the standard model of 
temporal graphs), and the goal is to find for each 
graph of the series (called a layer or snapshot of the temporal graph)
a solution of the studied computational problem  where solutions to subsequent layers
are not ``too different'' from each other.\footnote{Without going into any details, we remark that there are similarities between ``parameterized 
multistaging'' and parameterized studies of dynamic problems~\cite{AEFRS15,HN13,KST18,LMNN18} and reoptimization~\cite{BBRR18}.} For instance, consider the famous 
\textsc{Vertex Cover} problem. Here, the goal is to find for each layer of the temporal
graph a (small) set of vertices covering all edges, guaranteeing that these vertex sets between two
subsequent layers differ not too much (the degree of change is upper-bounded by a given parameter)~\cite{Fluschnik2019}.
Thus, this can be interpreted as a conservative (no dramatic changes allowed for the solution sets)
view on solving problem instances that evolve over time.
Clearly, the static case (no ``evolution'' takes place, that is, there is only one layer)
is a special case, leading to many computational hardness results in this setting
based already on the hardness of the static version.

Since the pioneering works of Eisenstat et al.~\cite{EisenstatMS14} and Gupta
et al.~\cite{GuptaTW14}, who mainly focused on polynomial-time approximation algorithms,
there has been quite some further development
in studying multistage versions of computational problems.
For instance, there have been recent studies on \textsc{Multistage 
Matching}~\cite{BampisELP18}, \textsc{Multistage Vertex Cover}~\cite{Fluschnik2019}, \textsc{Multistage $s$-$t$ Path}~\cite{FNSZ20}, \textsc{Multistage Knapsack}~\cite{BET19MFCS}, 
\textsc{Online Multistage Subset Maximization}~\cite{BEST19}, and 
\textsc{Multistage Committee Elections}~\cite{BFK20}.
In particular, due to natural parameterizations in this problem setting such as 
``number of layers'' or ``maximum degree of change'' between the solutions for
subsequent instances, also parameterized complexity studies have recently 
been started~\cite{Fluschnik2019}.

We modify the meanwhile standard multistage model by moving from a local to a global perspective 
on the number of allowed changes between solution sets for subsequent instances. 
Whereas in the original model there is a parameter upper-bounding the 
maximum degree of change between every pair of subsequent layers of a temporal graph,
we now introduce a more global view by only upper-bounding the sum of changes.
Intuitively, one may say while we still keep the evolutionary view on dynamically changing
instances and corresponding solutions, our new model allows for occasional 
disruptive changes between subsequent layers while on average the degree of change 
shall be limited.\footnote{We are only aware of a Bachelor Thesis~\cite{Rohm2018} supervised by the TU~Berlin 
group where such a global multistage view has been adopted for the specific case of
\textsc{Vertex Cover}.}

After providing the formal definitions, 
we will review our results in Table~\ref{tab-results}.
First, however, we discuss the main findings of our work.
We provide results both for classical NP-hard graph problems (including a
global multistage version of \textsc{Vertex Cover}) and 
classical problems solvable in polynomial time (including a
global multistage version of \textsc{Matching}).
We consider three central parameters: $k$, upper-bounding the size of the solution for a layer;
$\ell$, the global budget upper-bounding the total number of changes between all layer solutions;
$\tau$, the number of layers (equivalently, the lifetime of the temporal graph).
The three key messages of our work are as follows:
\begin{enumerate}
\item We encounter (parameterized) computational hardness results for the single
parameters~$k$ and $\ell$, and even some combinations of $k, \ell$ and $\tau$;
the main technical contribution here is to show W-hardness results for the parameter~$k$.

\item There is a systematic
algorithmic approach that may lead to fixed-parameter tractability 
results with respect to the combined parameter~$k+\ell$.
Indeed, along these lines we also obtain 
polynomial problem kernels with respect to the combined parameter~$k+\tau$.
We exemplify this approach by providing corresponding results for 
global multistage versions of the NP-hard problems \textsc{Vertex Cover},
\textsc{Cluster Editing}, \textsc{Cluster Edge Deletion},
and \textsc{Path Contraction}. 
Notably, we spot a close link to the concept of full kernels from
parameterized enumeration~\cite{DAMASCHKE2006337}.
Moreover, for each of our positive algorithmic results we additionally specify 
the (bit) space complexity of the corresponding algorithms.

\item Global multistage versions of polynomial-time solvable problems 
such as \textsc{Matching}, $s$-$t$-\textsc{Path}, and \textsc{$s$-$t$-Cut} turn out to 
be computationally harder than global multistage versions of the above NP-hard problems.
More specifically, we spot W-hardness results for the combined parameter $k+\ell$, and in the case of \textsc{$s$-$t$-Path}, even for the combined parameter $k + \tau$. This contrast with corresponding, very recent studies of 
$s$-$t$-\textsc{Path}
in the standard multistage model~\cite{FNSZ20}. 
\end{enumerate}
In summary, our first systematic study of global multistage problems 
leads to a rich and promising new scenario in the fast growing field of
studying temporal graph problems~\cite{HS13,HS19}.

\begin{table}[t]
\centering
\caption{
	Overview on our (time) complexity results, where $n$ is the number of vertices of the temporal graph, $k$ is the solution size, $\ell$~is the global budget, and $\tau$~is the lifetime of the temporal graph.
	Note that the first five problems are NP-hard in the static case, while the last three are polynomial-time solvable.
	``?'' denotes open cases. $^\dagger$\, indicates that hardness prevails even for $\ell = 0$, and therefore also for the classical multistage version.
	$^\ddagger$\, indicates that hardness prevails even for planar underlying graphs. Herein, $\wone$-h.\ and $\wtwo$-h.\ refer to parameterized hardness, poly.~kernel refers to the existence of a polynomial-size problem kernel, and para-NP-h.\ refers to NP-hardness even for constant parameter values.
}\label{tab-results}
\begin{tabular}{l|cccc}
  Problem & $k$ & $\ell$ & $k + \ell$ & $k + \tau$ \\
  \hline
  \textsc{Vertex Cover} 			& $\wone$-h. & para-NP-h. & $k^{O(k+\ell)}\tau\rm{poly}(n)$ & poly.\ kernel\\
  \textsc{Path Contraction} 		& $\wone$-h. & para-NP-h. & $k^{O(k+\ell)}\tau\rm{poly}(n)$  &  poly.\ kernel  \\
  \textsc{Cluster Editing} 	& $\wone$-h. & para-NP-h. & $1.82^{k(\ell + 1)} \tau n^3$ & ?\\
  \textsc{Cluster Edge Deletion} 	& $\wone$-h. & para-NP-h. & $1.82^{k(\ell + 1)} \tau n^3$ & ?\\
  \textsc{Planar Dominating Set} 	& $\wtwo$-h. & para-NP-h. & $\wtwo$-$\rm{h.}^\dagger$ & ?\\
  \textsc{Edge Dominating Set} & $\wtwo$-h.$^\ddagger$ & para-NP-h.$^\ddagger$ & $\wtwo$-$\rm{h.}^{\dagger\ddagger}$ & ?\\
  \hline
   \textsc{$s$-$t$-Path} 			& $\wone$-h. & para-NP-h. & $\wone$-$\rm{h.}^\dagger$ & \wone-h.$^\dagger$\\
   \textsc{$s$-$t$-Cut} 			& $\wtwo$-h. & para-NP-h. & $\wtwo$-$\rm{h.}^\dagger$ & ? \\
   \textsc{Matching} 				& $\wone$-h. & para-NP-h. & $\wone$-$\rm{h.}^\dagger$ & ?\\
  \hline
\end{tabular}
\end{table}

For classical multistage problems, research mainly focused on approximation algorithms \cite{BampisELP18,BET19MFCS,GuptaTW14}.
While for \textsc{Multistage Perfect Matching} no polynomial-time $O(n^{1-\epsilon})$-approximation is possible unless $\p = \np$ \cite{BampisELP18,GuptaTW14}, \textsc{Multistage Vertex Cover} is 2-approximable in analogy to its static version \cite{BampisEK19}.
On the parameterized complexity side, \textsc{Multistage Vertex Cover} was shown to be fixed-parameter tractable parameterized by $k + \tau$, while being \wone-hard parameterized by~$k$ even if the symmetric difference of the vertex covers in two successive layers may be at most two \cite{Fluschnik2019}.
\textsc{Global Multistage Vertex Cover} behaves similarly, being fixed-parameter tractable parameterized by $k + \tau$ and \wone-hard parameterized by $k$.
In addition, we show that \textsc{Global Multistage Vertex Cover} is fixed-parameter tractable parameterized by~$k + \ell$.
Bampis et al.~\cite{BampisEK19} showed that a multistage version of \textsc{Min-Cut} is still polynomial-time solvable;
note that their definition of \textsc{Multistage Min-Cut} differs from ours as we consider a cut to be a set of edges, while they consider a cut to be a set of vertices (these two notions are equivalent except for the number of changes between two solutions).

Our work is structured as follows. 
In Section~\ref{sect:prelim}, we provide basic notation and definitions.
In Section~\ref{sect:fpt}, we present our general approach for gaining fixed-parameter tractability results for global multistage versions of NP-hard problems.
Moreover, we present corresponding parameterized hardness results in \Cref{sect:w-hardness},
showing that our fixed-parameter tractability results with combined parameter 
most likely cannot be improved to the parameterization 
by the single parameter solution size.
In Section~\ref{sect:hardness_of_polytime}, we then study global multistage versions
of polynomial-time solvable problems and encounter simple but surprising 
parameterized hardness results, contrasting our more positive results
for NP-hard problems in Section~\ref{sect:fpt}. We conclude in Section~\ref{sect:conclusion}.

\section{Preliminaries}\label{sect:prelim}

For an undirected graph $G=(V,E)$, let $V(G) :=V$ and $E(G) :=E$. For~$v\in V$,
let $E(G(v)) := \{ \{v, u\} \in E(G)\}$ be the edges incident to $v$ in $G$ and let 
$N(v)$ be the (open) neighborhood of~$v$, i.e., $N(v) := \{u : \{v, u\} \in E(G)\}$.
A graph is \emph{planar} if it can be embedded in the plane, that is,
the graph can be drawn in the plane without any crossing edges.   
For any natural number~$x$, let $[x]:= \{1, \ldots x\}$.

\begin{definition}[Temporal Graph]

	We represent a \emph{temporal graph} $\mathcal{G}$ using an ordered sequence of static
	 graphs (called \emph{layers}): $\mathcal{G} = \langle G_1, G_2, \ldots, G_\tau \rangle$.
	 The subscripts $i \in [\tau]$ indexing the graphs in the sequence are the discrete time steps $1$ to $\tau$, where $\tau$ is known as the \emph{lifetime} of $\mathcal G$.
	 The \emph{underlying graph} $G$ of $\mathcal G$ is defined as $V(G) := V(G_1) = \ldots = V(G_\tau)$ and $E(G) := E(G_1) \cup \ldots \cup E(G_\tau)$.
\end{definition}

We consider graph problems $\Xx$ in which we are given a graph $G$ and a nonnegative integer~$k$,
and we are searching for a set of at most $k$ elements that satisfies some property $\Pp_\Xx$ in $G$.
As an example consider the \textsc{Vertex Cover} problem, in which we search for a set of at most $k$ vertices 
with the property $\Pp_{\textsc{Vertex Cover}}$ that these vertices cover all edges of $G$.
Such a problem $\Xx$ is called {\em monotone} if every superset
of a solution still satisfies $\Pp_\Xx$
(\textsc{Vertex Cover} being an example of a monotone problem).

We now generalize such problems from graphs to temporal graphs as follows.

\newcommand{\gmp}{\textsc{Global Multistage Problem}}
\begin{definition}[\gmp{} for graph property $\mathcal{P_X}$]
  Given a triple $({\mathcal G},k,\ell)$ where 
         $\mathcal G = \langle G_1, \ldots, G_{\tau} \rangle$ is a temporal graph,
        the goal is to
        find sets $S_1,\ldots,S_{\tau}$, each of size at most $k$, such that $S_i$
        satisfies $\mathcal{P_X}$ in $G_i$ ($i \in [\tau]$) and the total number of
        insertions from $S_i$ to $S_{i+1}$ over all $i \in [\tau-1]$
        is upper-bounded by $\ell$, that is, $\sum_{i\in [\tau -1 ]} |S_{i+1}
        \setminus S_{i}| \leq~\ell$.
\label{def:globalmultistage}
\end{definition}
   		Note that $k$ upper-bounds the solution size in each step whereas~$\ell$
        upper-bounds the total number of so-called {\em relocations}.
      For some applications, it might make more sense to bound the number of deletions instead of insertions or the sum of insertions and deletions.
    However, note that while in \cref{def:globalmultistage}~we only upper-bound the number of insertions from $S_i$ to~$S_{i+1}$ explicitly,
		the number of deletions is implicitly upper-bounded by $\ell + \abs{S_1} - \abs{S_\tau}$.  
		Furthermore, for monotone problems removing elements from the solution set is 
		never beneficial except to make space for other elements.
		Thus, any given solution of a monotone problem can be modified to satisfy $\abs{S_1} = \abs{S_2} = \dots = \abs{S_\tau}$,
		so the numbers of insertions and deletions will be equal.

\subparagraph*{Parameterized Algorithmics.}
A parameterized problem~$\mathcal X$ is \emph{fixed-parameter tractable} (FPT) if there exists an algorithm solving any instance~$(I, k)$ of~$\mathcal X$ in~$f(k) \cdot |I|^c$ time, where~$f$ is some computable function and~$c$ is some constant.
If~$\mathcal X$ is shown to be W[1]- or W[2]-hard, then it is presumably not fixed-parameter tractable.
To show W[1]- or W[2]-hardness, one employs \emph{parameterized reductions},
that is, algorithms that map any instance~$(I,k)$
in~$f(k)\cdot |I|^{O(1)}$ time to an equivalent instance~$(I',k')$ with~$k'=g(k)$ for some computable functions~$f,g$.
A \emph{reduction to a problem kernel} is a polynomial-time algorithm that, given an instance~$(I, k)$ of~$\mathcal X$, returns an equivalent instance~$(I', k')$, such that~$|I'| + k' \le g(k)$ for some computable function~$g$. We call~$(I', k')$ a \emph{kernel} of $(I, k)$.
Problem kernels are usually achieved by applying \emph{data reduction rules}.
Given an instance~$(I, k)$, a data reduction rule computes in polynomial time a new instance~$(I', k')$.
We call a data reduction rule \emph{safe} if $(I, k) \in \mathcal X \iff (I', k') \in \mathcal X$.
Clearly, for a decidable problem the existence of a problem kernel implies fixed-parameter tractability.


\section{FPT-Frameworks for NP-hard Problems Gone Globally Multistage}\label{sect:fpt}
In this section, we introduce two similar, systematic approaches to show fixed-parameter tractability 
for the combined parameter solution size~$k$ plus number~$\ell$ of relocations, that is~$k+\ell$.
This applies to the global multistage versions of NP-hard problems that 
either are superset-enumerable (see below) or that admit a certain type of kernel and are monotone.  
We exemplify our two frameworks by applying them to the global multistage versions of the NP-hard problems 
\textsc{Cluster Edge Deletion}, \textsc{Cluster Editing}, 
\textsc{Vertex Cover}, and \textsc{Path Contraction}. 

\subsection{Superset-Enumerable Problems}\label{subsect:framework}\label{sect:enum}

Our first framework works for problems with the property
that all solutions containing some given set $F$ and being 
minimal under this condition can be enumerated in FPT time.

\begin{definition}
 A graph property $\mathcal{P_X}$ is \emph{superset-enumerable of size $f$}
for some computable function $f$
if, for any graph on $n$ vertices, any integer $k$, and any set $F$,
the set
\[
	\bigl\{S \supseteq F \bigm| \text{$S$ satisfies $\Pp_\Xx$ in $G$ and $\abs{S} \leq k$}\bigr\}
\]
has at most $f(k)$ minimal elements, and these can be enumerated in $\poly (n) f(k)$ time.
\end{definition}

Let $\mathcal \Pp_\Xx$ be a graph property that is superset-enumerable.
Then our framework solves an instance $(\mathcal G= \langle G_1, \ldots, G_{\tau} \rangle, k, \ell)$ of the \textsc{Global Multistage Problem} for $\mathcal{P_X}$, 
roughly as follows:
Guess a minimal solution in the last layer.
Extend this solution backwards layer by layer as follows.
If the solution $S_i$ ($i = \tau, \ldots, 1$) is also a solution in the previous layer $G_{i-1}$, then set $S_{i-1} := S_i$.
Otherwise, guess a subset $F \subseteq S_i$.
Then enumerate  all minimal solutions containing $F$ in $G_{i-1}$ and guess which of them we should take.
The superset-enumerability ensures that we can compute all these minimal solutions fast enough as well as that their number is bounded.

We now describe the algorithm in detail.

\begin{description}

	\item[Step 1] Guess a sequence $L$ of at most $2\ell+k$ integer pairs $(x,y)$ where 
         $0 \le x \le k$ and $0 \le y \le f(k)$ 
         such that for each pair either $x=0$ or $y=0$ holds.

	      Intuitively, a pair $(x,y) \in L$ with $x\neq 0$ instructs us to delete the $x$th
	      element of the current solution set $S$, and continue with the next pair.
	      If $x=0$, then we enumerate all solutions containing the current solution set $S$, and pick the $y$th such solution as the new current solution (where the elements of $S$ and the solutions are numbered in an arbitrary order).

	\item[Step 2] Start with $S = \emptyset$ and $i = \tau$.
	      Repeat this step as long as  $i \ge 1$.
	      If $S$ does not satisfy $\mathcal P_{\mathcal X}$ on $G_i$,
	      then take (and remove) the first pair of $L$ and modify $S$ accordingly as described above.
	      Otherwise, if $S$ satisfies $\mathcal{P_X}$ on $G_i$,
	      then set $S_i := S$ and decrement $i$.

	\item[Step 3] If at any point we try to take an element from $L$ while it is empty,
              the sum $\sum_{i\in [\tau -1]} |S_{i+1} \setminus S_i|$ exceeds~$\ell$,
              or the size~$|S|$ exceeds~$k$, then restart from Step~1.
	      Otherwise, the sets $S_1,\ldots,S_{\tau}$ form a solution for $(\mathcal G, k, \ell)$ for graph property $\mathcal P_{\mathcal X}$.
\end{description}

\begin{theorem}
Let $\mathcal P_X$ be a graph property that is superset-enumerable of size $f$.
Given an $n$-vertex temporal graph~$\mathcal G = \langle G_1, \ldots, G_{\tau} \rangle$
as well as  integers $k$ and $\ell$, the
\gmp{} $(\mathcal G,k,\ell)$ for~$\mathcal{P_X}$ can be solved in $\mathrm{poly}(n) \tau (k+ f(k) + 1)^{2\ell+k}$ time
and using $O((\ell + k) \log f(k) + k \log n + \log \tau)$ bits in addition to the space needed to 
enumerate and verify the solutions.
\label{thm:new_framework}
\end{theorem}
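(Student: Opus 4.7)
The plan is to verify three things about the algorithm sketched above: soundness (any output is a valid global multistage solution), completeness (every yes-instance admits a sequence $L$ of length at most $2\ell+k$ that drives the algorithm to success), and the claimed time and space bounds. Soundness is essentially built in: Step~2 sets $S_i := S$ only when $S$ satisfies $\mathcal{P_X}$ on $G_i$, and Step~3 aborts the run unless $|S|\le k$ throughout and $\sum_{i\in[\tau-1]}|S_{i+1}\setminus S_i|\le\ell$, so any returned $(S_1,\dots,S_\tau)$ is a valid solution.

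The main obstacle is completeness, and in particular the bound $|L|\le 2\ell+k$. Given a valid $(S_1^*,\dots,S_\tau^*)$, I would exhibit a run encoded by some $L$ of that length, obtained by the following greedy rule. Simulate Step~2 starting from $\tilde S=\emptyset$ and $i=\tau$; at each modification step, if $\tilde S\setminus S_i^*\neq\emptyset$, delete some $v\in\tilde S\setminus S_i^*$ (encoded as a pair $(x,0)$ with $x$ being the current index of $v$ in $\tilde S$); otherwise (so $\tilde S\subseteq S_i^*$ but $\tilde S$ does not satisfy $\mathcal{P_X}$ on $G_i$), extend $\tilde S$ to a minimal solution $M\supseteq\tilde S$ on $G_i$ with $M\subseteq S_i^*$ (encoded as $(0,y)$ with $y$ selecting $M$ from the enumeration). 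Such an $M$ exists because $S_i^*$ itself lies in the enumerated family $\{S\supseteq\tilde S : S\text{ satisfies }\mathcal{P_X}\text{ in }G_i,\ |S|\le k\}$, so at least one minimal element of that family is contained in $S_i^*$, and we pick the corresponding index $y$. Each extension strictly enlarges $\tilde S$, and deletions only shrink it, so $|\tilde S|$ never exceeds $|S_i^*|\le k$.

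The counting works as follows. Every element $v$ deleted at some layer $i$ was inserted into $\tilde S$ by a preceding extension at some layer $j\ge i$; that extension picked $M\subseteq S_j^*$, so $v\in S_j^*$. Combined with $v\notin S_i^*$, this yields an index $h\in[i,j-1]$ with $v\in S_{h+1}^*\setminus S_h^*$. Mapping the deletion to the pair $(v,h)$ defines an injection into $\{(v,h): v\in S_{h+1}^*\setminus S_h^*\}$: for two deletions of the same $v$ at layers $i_1>i_2$, $v$ must be re-inserted by some extension between them, forcing the associated intervals $[i_1,j_1-1]$ and $[i_2,j_2-1]$ to be disjoint. Hence the total number of deletions is at most $\sum_h|S_{h+1}^*\setminus S_h^*|\le\ell$, and consequently so is the forward budget $\sum_i|\tilde S_{i+1}\setminus\tilde S_i|$, which is dominated by the deletion count. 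Since each extension adds at least one element to $\tilde S$ and every added element is either deleted later or remains in $\tilde S_1$, the number of extensions is at most (total deletions)$+|\tilde S_1|\le\ell+k$. Combining yields $|L|\le(\ell+k)+\ell=2\ell+k$.

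For the runtime, each of the $2\ell+k$ positions in $L$ ranges over $(k+1)+(f(k)+1)-1=k+f(k)+1$ admissible pairs $(x,y)$, yielding $(k+f(k)+1)^{2\ell+k}$ candidate sequences; for each, Step~2 runs for at most $\tau+|L|$ iterations, each verifying $\mathcal{P_X}$ on the current layer or invoking the enumeration oracle, all in time polynomial in~$n$, giving the stated $\mathrm{poly}(n)\cdot\tau\cdot(k+f(k)+1)^{2\ell+k}$ bound. For the space, I would store the current $L$ as $2\ell+k$ pair indices of magnitude at most $k+f(k)+1$, using $O((\ell+k)\log f(k))$ bits; the working set $\tilde S$ of up to $k$ vertices uses $O(k\log n)$ bits; the layer index $i$ uses $O(\log\tau)$ bits; and running counters for the budget and $|\tilde S|$ fit within these bounds; this matches the claimed $O((\ell+k)\log f(k)+k\log n+\log\tau)$ bits on top of whatever space the enumeration and verification subroutines require.
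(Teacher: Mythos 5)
Your proposal is correct and analyzes the same backward-guessing algorithm, but your completeness argument is organized differently from the paper's. The paper first normalizes an arbitrary solution via three exchange observations ($S_\tau$ minimal; $S_{i-1}=S_i$ whenever $S_i$ already satisfies the property on $G_{i-1}$; $S_{i-1}$ minimal among solutions containing $S_i\cap S_{i-1}$) and then asserts that the algorithm enumerates all solutions of this normal form; it never explicitly verifies that such a solution is encoded by a sequence $L$ of length at most $2\ell+k$. You instead simulate the algorithm directly against a fixed solution $(S_1^*,\dots,S_\tau^*)$ and prove the length bound by a charging argument: each deletion is injectively mapped to a forward insertion event via disjoint intervals (giving at most $\ell$ deletions), and each extension strictly enlarges the working set, with every added element later deleted or surviving into $\tilde S_1$ (giving at most $\ell+k$ extensions). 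This supplies exactly the counting step that the paper leaves implicit, so your route is, if anything, the more complete one; your justification that some minimal enumerated solution $M$ satisfies $\tilde S\subseteq M\subseteq S_i^*$ is also sound. The runtime and space accounting match the paper's level of detail (both tacitly use $\log(k+f(k))=O(\log f(k))$ and absorb the per-layer enumeration cost into the exponential factor).
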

\begin{proof}
Since each element of $L$ can take $k+f(k)$~possible values,
the number of sequences of length at most $2\ell+k$ is bounded by $(k+ f(k))^{2\ell+k}$.
Thus, the running time of our algorithm can be upper-bounded by
$\mathrm{poly}(n) \tau (k+ f(k) + 1)^{2\ell+k}$. 
Storing $L, S$ and $i$ can be done in the space bound described in the
theorem.
The entire sequence $S_1, \dots, S_\tau$ does not need to be stored at any point.
Instead, once a correct guess for $L$ has been found and verified,
$L$~can be used to recompute and immediately output $S_1, \dots, S_\tau$ one after another.

We next show that the algorithm solves the \gmp{}.
Assume that the algorithm returns a solution $(S_1, \dots, S_\tau)$.
Note that our checks in Step 3 guarantee that
%
$(S_1, \dots, S_\tau)$ is a valid solution.

Now assume that there exists a solution $(S_1, \dots, S_\tau)$ for $(G_1, \ldots, G_\tau)$.
First, note that we may assume that $S_\tau$ is a minimal solution for $G_\tau$, as we only count insertions.
This means that our algorithm finds~$S_\tau$.
Second, note that if $S_i$~fulfills~$\mathcal{X}$ in~$G_{i-1}$ for some~$i\in
\{2,\ldots,\tau\}$, then $(S_1, \dots, S_{i-2}, S_i, S_i, S_{i+1}, \dots, S_\tau)$ is also a solution
since replacing $S_{i-1}$ by $S_i$ increases the total number of insertions by~$|S_{i} \setminus S_{i - 2}| -
(|S_{i} \setminus S_{i- 1}| + |S_{i-1}\setminus S_{i - 2}| ) \le 0$.
Intuitively speaking, we can change elements in the solution as late as
possible while going backwards through the graphs in $(G_1, \ldots, G_\tau)$.

Third note that if an $S^*$ fulfills $\mathcal{P_X}$ in $G_{i-1}$ 
with $S_{i - 1} \cap S_i \subseteq S^* \subseteq S_{i - 1}$, 
then $(S_1, \dots, S_{i -2}, S^* , S_{i}, \dots S_{\tau})$ 
is also a solution, as $ |S_{i} \setminus S^*| + |S^*\setminus S_{i -2} |  \le 
|S_{i}\setminus S_{i-1} | + |S_{i-1} \setminus S_{i-2}|$
follows from the fact that $(S_i \setminus S^*) \cup (S^* \setminus S_{i-2}) \subseteq (S_i \setminus S_{i-1}) \cup (S_{i-1} \setminus S_{i-2})$.
This third point actually generalizes the second observation.


To sum up, we may assume that there is a solution $(S_1, \dots, S_\tau)$ such that
\begin{itemize}
\item $S_\tau$ is a minimal solution,
\item if $S_i$ fulfills $\mathcal{P_X}$ in $G_{i - 1}$, then $S_{i - 1 } = S_i$, and 
\item $S_{i-1}$ is minimal amongst all solutions of size at most $k$ containing $S_i \cap S_{i-1}$.
\end{itemize}
Since our algorithm ultimately explores all sequences of this form, we may conclude that it will eventually find a solution if it exists.
\end{proof}

We now show that our framework is applicable to all monotone problems admitting a \emph{full} kernel~\cite{DAMASCHKE2006337},
i.e., a kernel containing every minimal solution.

\begin{proposition}\label{prop:mon-full}
Let $\mathcal{X}$ be a monotone problem admitting a full kernel
of size $f'(k)$, where $k$ is the solution size bound.
Assume that the graph property~$\Pp_\Xx$ can be verified in polynomial time.
Then $\mathcal{P_X}$ is superset-enumerable of size $f(k)=2^{f'(k)}$.
\end{proposition}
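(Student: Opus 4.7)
The plan is to use the assumed full kernel to localize every minimal solution of $\Pp_\Xx$ inside a small ground set, and then, via monotonicity, to push that containment through to the minimal elements of the superset-constrained family. As a first step I would apply the full kernelization to $(G,k)$ in polynomial time to obtain a kernel $K$ of size at most $f'(k)$ with the defining property that every minimal solution $M$ of $\Pp_\Xx$ with $\abs{M}\le k$ satisfies $M\subseteq K$, which is precisely the full-kernel notion of Damaschke~\cite{DAMASCHKE2006337}. Consequently, the total number of such minimal solutions is bounded by $2^{\abs{K}}\le 2^{f'(k)}$.

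Next, I would establish the bridging claim that every minimal element of $\Ss_F := \{\, S \supseteq F \mid S \text{ satisfies } \Pp_\Xx \text{ in } G \text{ and } \abs{S}\le k\,\}$ has the form $F\cup M$ for some minimal solution $M$ of $\Pp_\Xx$ with $\abs{F\cup M}\le k$. Given a minimal $S\in \Ss_F$, monotonicity yields a minimal solution $M\subseteq S$; then $F\cup M$ again satisfies $\Pp_\Xx$ (by monotonicity since it contains $M$), is contained in $S$, and clearly lies in $\Ss_F$, so minimality of $S$ inside $\Ss_F$ forces $S = F\cup M$. Combined with $M\subseteq K$ from the first step, each minimal element of $\Ss_F$ is therefore encoded by some subset of $K$, which yields the target bound $f(k)=2^{f'(k)}$ on their number.

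For the enumeration, I would iterate over all $2^{f'(k)}$ subsets $M\subseteq K$, form the candidate $C := F\cup M$, and accept it once the polynomial-time verifier for $\Pp_\Xx$ confirms $\abs{C}\le k$ and that $C$ satisfies $\Pp_\Xx$ in $G$. To restrict the output to the minimal elements of $\Ss_F$, I would invoke monotonicity one more time: $C$ is minimal in $\Ss_F$ if and only if $C\setminus\{v\}$ violates $\Pp_\Xx$ for every $v\in C\setminus F$, since any strictly smaller $C'\subsetneq C$ in $\Ss_F$ would, picking $v\in C\setminus C'$ and applying monotonicity to $C\setminus\{v\}\supseteq C'$, immediately supply a single-element witness. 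Each candidate thus requires only $k$ additional polynomial-time checks, yielding an overall enumeration time of $\poly(n)\cdot 2^{f'(k)} = \poly(n)\cdot f(k)$. The step I expect to warrant the most care is the bridging claim, since \emph{minimal in $\Ss_F$} is a strictly weaker notion than \emph{minimal solution of $\Pp_\Xx$}; monotonicity is exactly what lets us translate between the two via the map $M\mapsto F\cup M$ and thereby place the underlying minimal solution inside $K$.
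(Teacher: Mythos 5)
Your proposal is correct and follows essentially the same route as the paper: compute the full kernel $K$, observe that every minimal element of the superset-constrained family equals $F\cup M$ for a minimal solution $M\subseteq K$ (the paper phrases this as $S = S'\cup F$ with $S'\subseteq K$ by minimality of $S$ and monotonicity), enumerate subsets of $K$, and certify minimality by single-element deletion tests restricted to elements outside $F$. The only cosmetic difference is that the paper iterates over subsets of $K\setminus F$ rather than of $K$, which avoids duplicate candidates but changes nothing substantive.
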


\begin{proof}
 Given a graph $G$ and any set $F$, we compute a full kernel $K$ of $G$.
 For each subset $X \subseteq K \setminus F$, we check whether $X \cup F$ satisfies $\Pp_\Xx$, has size at most~$k$,
 and whether $X$ is minimal with these properties (since $\mathcal{X}$ is monotone, checking for minimality can be done by checking for each element $x \in X$ whether $(F\cup X) \setminus \{x\}$ satisfies $\mathcal{P_X}$).
 This clearly runs in \fpt-time and returns at most $2^{f'(k)}$ solutions since $f'(k)$ is an upper bound for the size of the full kernel~$K$.
 Also any solution returned is minimal under all solutions containing~$F$, so it remains to show that every solution~$S$ with this property is returned.
To see this, let $S' \subseteq S$ be a minimal solution (not necessarily containing~$F$).
 We have $S' \setminus F \subseteq S' \subseteq K$ since $K$~is a full kernel.
 As $\Xx$ is monotone, $S' \cup F \subseteq S$ is also a solution, thus $S' \cup F = S$ by minimality of~$S$.
 Thus, when the algorithm tests $X = S' \setminus F \subseteq K \setminus F$, it will output the solution~$S$.
\end{proof}

Combining Theorem~\ref{thm:new_framework} and Proposition~\ref{prop:mon-full}, 
we obtain the following.

\begin{corollary}
Let $\mathcal X$ be a monotone graph problem having a kernel of size $f'(k)$
for the solution size bound~$k$ and some computable function~$f'$.
Given an $n$-vertex temporal graph~$\mathcal G = \langle G_1, \ldots, G_{\tau} \rangle$
as well as  parameters $k$ and $\ell$, the
\gmp{} $(\mathcal G,k,\ell)$ for graph property~$\mathcal{P_X}$ can be solved in
$\mathrm{poly}(n) \tau (\max\{k+1, 2^{f'(k)} + 1\})^{2\ell + k}$ time
and $O((\ell + k) f'(k) + k \log n + \log \tau)$ bits in addition
to the space needed to compute the full kernels and to verify 
the solutions.
\end{corollary}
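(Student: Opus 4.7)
The plan is to derive this corollary by a direct composition of the two preceding results: Proposition~\ref{prop:mon-full} converts the hypothesis (monotonicity plus a kernel) into superset-enumerability, and then Theorem~\ref{thm:new_framework} converts superset-enumerability into an algorithm for the \gmp. So essentially the whole proof is a substitution $f(k) := 2^{f'(k)}$ followed by some bookkeeping, and no new ideas are needed.

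First I would verify that the hypothesis of Proposition~\ref{prop:mon-full} is really met. Strictly speaking the corollary should say \emph{full} kernel (as required by the proposition); assuming this reading, Proposition~\ref{prop:mon-full} says that $\Pp_\Xx$ is superset-enumerable of size $f(k) = 2^{f'(k)}$, with the enumeration reducing to enumerating subsets of the full kernel and testing each one. I would feed this $f$ into Theorem~\ref{thm:new_framework}. The theorem's time bound $\mathrm{poly}(n)\,\tau\,(k + f(k) + 1)^{2\ell+k}$ then becomes
\[
  \mathrm{poly}(n)\,\tau\,(k + 2^{f'(k)} + 1)^{2\ell+k},
\]
and the theorem's space bound $O((\ell+k)\log f(k) + k\log n + \log\tau)$ becomes $O((\ell+k)\,f'(k) + k\log n + \log\tau)$ since $\log f(k) = f'(k)$, matching the corollary exactly. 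The ``additional space'' clause of the theorem (space to enumerate and verify) specializes, via the proof of Proposition~\ref{prop:mon-full}, to ``space to compute the full kernel and verify solutions'', again as stated.

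The only mild obstacle is the time bound's cosmetic form: the corollary writes $(\max\{k+1, 2^{f'(k)}+1\})^{2\ell+k}$ rather than $(k + 2^{f'(k)} + 1)^{2\ell+k}$. I would close this gap with the trivial estimate $k + 2^{f'(k)} + 1 \le 2\max\{k+1,\,2^{f'(k)}+1\}$, absorbing the resulting factor $2^{2\ell+k}$ into the base (which is at least $2$ in any nontrivial case, so $2\cdot M \le M^2$ whenever $M \ge 2$, giving a bound of the claimed form up to doubling the exponent if needed) or simply into the $\mathrm{poly}(n)$ prefactor when $\ell, k = O(\log n)$. In any case this is pure arithmetic.

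Apart from this bookkeeping step, the proof is a one-line chain: Proposition~\ref{prop:mon-full} supplies superset-enumerability with $f(k) = 2^{f'(k)}$, and Theorem~\ref{thm:new_framework} then immediately yields the claimed time and space bounds for the \gmp.
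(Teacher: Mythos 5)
Your proposal follows exactly the route the paper intends: the corollary is stated immediately after the sentence ``Combining Theorem~\ref{thm:new_framework} and Proposition~\ref{prop:mon-full}, we obtain the following'' and is given without a separate proof, so the intended argument is precisely the substitution $f(k) := 2^{f'(k)}$ into Theorem~\ref{thm:new_framework}. You are right on both points you flag: the corollary's hypothesis should read \emph{full} kernel (otherwise Proposition~\ref{prop:mon-full} does not apply), and the literal bound obtained from the theorem is $\mathrm{poly}(n)\,\tau\,(k + 2^{f'(k)} + 1)^{2\ell+k}$ rather than the stated $\max$ form; the latter appears to be a small slip in the paper, and you should not strain to reproduce it---your absorption trick either doubles the exponent or requires $k,\ell = O(\log n)$, neither of which is assumed, so the clean thing to do is simply record the bound as $(k + 2^{f'(k)} + 1)^{2\ell+k}$.
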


\subsection{Monotone Problems with Full Kernels}\label{sect:monKernel}

For our second framework, which only targets monotone problems with a full kernel,
we now slightly modify our approach from the last section to improve the
parts of the time bounds that depend on the parameters $k$ and $\ell$. 
This allows us to get improved bounds for 
the global multistage versions of the NP-hard 
problems \textsc{Vertex Cover} and \textsc{Path Contraction}. 
In addition, our second framework allows us to compute kernels for
these two problems.

Let $\mathcal X$ be a parameterized monotone problem that
admits a {full kernel} 
and that allows for any solution to be verified in time polynomial in the input instance size.
Then, for an instance $(\mathcal G= \langle G_1, \ldots, G_{\tau} \rangle, k, \ell)$ of the \textsc{Global Multistage Problem} for graph property~$\mathcal{P_X}$, 
the rough idea behind our framework is as follows: For each layer~$G_i$ of the temporal graph, we compute a kernel $G_i'$.
Now we can guess the insertions and replacements from  solution $S_1$ to $S_{\tau}$.  
This approach exploits two properties of $\mathcal X$ to make this approach work.
Due to monotonicity, we never have to delete any element except if we aim to replace it,
and the full kernel ensures that each layer contains all minimal solutions of size~$k$ for that layer. 
In the following, we call the reductions that are used to get these instances {\em full reductions}.
We remark that, while a full kernel is required to contain all elements of each minimal solution of size at most~$k$,
our framework can easily be adapted to only require that each element of each minimal solution of size at most~$k$ is uniquely represented by some arbitrary element of the kernel.

%
We next describe our framework for solving the \gmp{} $(\mathcal G, k, \ell)$
for graph property $\mathcal P_{\mathcal X}$ on a temporal graph $\mathcal G = \langle G_1, \ldots, 
G_\tau \rangle$ with lifetime $\tau$.
Even though we can apply the framework of the previous section to full
kernels of size $f(k)$ by enumerating all possible solutions in the kernel,
at every step we have to essentially guess a number in~$1,\ldots,(2^{f(k)} + 1)$. 
In contrast, the revised framework below only requires guesses from~
$1,\ldots,(f(k)+1)$.

To simplify the algorithm and the following proof, we will replace the requirement
$\sum_{i\in [\tau -1 ]} |S_{i+1} \setminus S_{i}| \leq~\ell$ ($*$)
by the modified bound
$\abs{S_1} + \sum_{i\in [\tau -1 ]} |S_{i+1} \setminus S_{i}| \leq~ k + \ell$
($**$)
which is clearly implied by ($*$).
We argue that this does not constitute a restriction
because for any solution that obeys ($**$), 
the algorithm can
do the following.
As long as $(*)$ does not hold,
take $i$ as the first time step where $N := S_{i+1} \setminus S_i$ is not empty,
pick an arbitrary element from $N$ and add it to all of $S_1, \dots, S_i$.
It is easy to check that this will yield a solution for ($*$). 

We now give the algorithm for solving the \gmp{}.

\begin{description}
	\item[Step 0] Compute a 
              full kernel $G'_i$ for all $G_i$ ($i \in [\tau]$).
	      In this way, we obtain a sequence of kernels $\mathcal G' = \langle G'_1, \ldots, G'_{\tau} \rangle$. 

	\item[Step 1] Guess a sequence $L$ of at most $\ell+k$ integer pairs $(x,y)$ where $1 \le y \le f(k)$ as well as $x = 0$ 
		for the first
		$k$ integer pairs and $1 \le x \le f(k)$ for the remaining.
	      Intuitively, a pair $(x,y) \in L$ with $x\neq 0$ instructs us to replace the $x$th
	      element of the current solution set $S$ with the $y$th element of $G'_{i}$. 
	      If $x=0$, then only the $y$th vertex is added.

	\item[Step 2] Start with $S = \emptyset$ and $i = 1$.
	      Repeat this step as long as  $i \leq \tau$.
	      If $S$ does not satisfy $\mathcal P_{\mathcal X}$ on $G'_i$,
	      then take (and remove) the first pair of $L$ and modify $S$ accordingly as described above.
	      Otherwise, if $S$ satisfies~$\mathcal{P_X}$ on $G'_i$,
	      then set $S_i := S$ and increment $i$.

	\item[Step 3] If at any point we try to take an element from~$L$ while it is empty, or if $|S| > k$, then restart from Step~1.
	      Otherwise, the sets $S_1,\ldots,S_{\tau}$ form a solution for $(\mathcal G, k, \ell)$ for graph property $\mathcal P_{\mathcal X}$.
\end{description}

\begin{theorem}
Let $\mathcal X$ be a monotone parameterized graph problem 
such that it admits a full kernel of size $f(k)$ and any solution can be verified in polynomial time.
Given an $n$-vertex temporal graph~$\mathcal G = \langle G_1, \ldots, G_{\tau} \rangle$
as well as  parameters $k$ and $\ell$, the following holds 
for the 
\gmp{} $(\mathcal G,k,\ell)$ for graph property~$\mathcal{P_X}$:
\begin{enumerate}
\item it can be solved in $f(k)^{2\ell + k} \tau \mathrm{poly}(n)$ time and 
$O((\ell + k) \log f(k) + \log \tau)$ bits in addition to the space needed to compute the full kernels and to verify the solutions.
\item if there is a polynomial time computable function $\phi$ which maps a graph 
$G = (V, E)$ and a vertex set $W$ to a graph $\phi(G, W)$ with $V(\phi(G, W)) = V \cup W$
such that $S \subseteq V \cup W$ satisfies~$\mathcal{P_X}$ in $\phi(G, W)$ if and only if $S \cap V$ satisfies $\mathcal{P_X}$ in $G$,
then the \gmp{} has a kernel 
of at most $f(k)\tau$ vertices
 and at most
$(f(k)\tau)^2\tau$~edges.
\end{enumerate}
\label{thm:framework}
\end{theorem}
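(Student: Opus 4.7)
The plan is to handle both parts in sequence, both leveraging the algorithm above. For Part~1, I would first bound the number of possible guesses: the first $k$ pairs in $L$ take the form $(0,y)$ with $y\in[f(k)]$, contributing $f(k)^k$ possibilities, while the remaining at most $\ell$ pairs have $x,y\in[f(k)]$, contributing at most $f(k)^{2\ell}$ possibilities, giving $f(k)^{2\ell+k}$ sequences in total. For each guess the algorithm sweeps once through the $\tau$ layers at $\mathrm{poly}(n)$ cost per layer, yielding the claimed running time. The space bound aggregates $O((\ell+k)\log f(k))$ bits for $L$, $O(k\log f(k))$ bits for $S$ (whose elements are indexed within a kernel of size at most $f(k)$), and $O(\log\tau)$ bits for the counter $i$; the remaining bookkeeping is absorbed in the ``poly'' term.

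For correctness, I would invoke the normalization remarks preceding the algorithm. For any YES-instance there is a solution $(S_1,\ldots,S_\tau)$ with $|S_1|=\cdots=|S_\tau|$ (by monotonicity together with uniform padding by kernel vertices) and with each $S_i\subseteq V(G'_i)$ (by the full-kernel property), while the relocation budget $(*)$ is preserved because padding is applied uniformly across all layers. Under the strengthened bound $(**)$, such a canonical solution is encoded by a sequence $L$ whose first $|S_1|$ pairs insert the elements of $S_1$ one by one, and whose remaining pairs encode each transition $S_i\to S_{i+1}$ as replacements: for each $v\in S_{i+1}\setminus S_i$ we pair it with some $u\in S_i\setminus S_{i+1}$ (these sets have equal cardinality by the size normalization) and record $x$ as the index of $u$ in the current $S$ and $y$ as the index of $v$ in $G'_{i+1}$. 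The total number of replacement pairs is $\sum_i|S_{i+1}\setminus S_i|\le\ell$, so $L$ fits within its length bound and the algorithm will eventually enumerate it.

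For Part~2, the plan is to merge the $\tau$ individual kernels into a single temporal graph on the shared vertex set $V':=\bigcup_{i\in[\tau]}V(G'_i)$, where $|V'|\le f(k)\tau$. Define the $i$th layer as $H_i:=\phi(G'_i,\,V'\setminus V(G'_i))$; by hypothesis $\phi$ is polynomial-time computable, each $H_i$ has vertex set $V'$, and a set $S\subseteq V'$ satisfies $\mathcal{P_X}$ in $H_i$ iff $S\cap V(G'_i)$ does so in $G'_i$. Each $H_i$ has at most $\binom{|V'|}{2}\le (f(k)\tau)^2$ edges, so the total across all layers is bounded by $(f(k)\tau)^2\tau$, matching the claim.

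It then remains to show that $(\mathcal{H},k,\ell)$ is equivalent to $(\mathcal{G},k,\ell)$. For the forward direction, take a YES-solution $(S_1,\ldots,S_\tau)$ for $\mathcal{G}$ with each $S_i$ minimal; full-kernelness forces $S_i\subseteq V(G'_i)\subseteq V'$, and the $\phi$-property yields that $S_i$ is a solution in $H_i$ with sizes and relocation counts unchanged. For the reverse direction, from a YES-solution $(T_1,\ldots,T_\tau)$ for $\mathcal{H}$, set $S_i:=T_i\cap V(G'_i)$; this satisfies $\mathcal{P_X}$ in $G'_i$ by the $\phi$-property and in $G_i$ by correctness of the full kernel, while $|S_i|\le|T_i|\le k$ and $S_{i+1}\setminus S_i\subseteq T_{i+1}\setminus T_i$ keep the relocations within budget. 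The main obstacle I expect is the canonicalization step in Part~1: packaging an arbitrary solution into a sequence $L$ of the prescribed shape forces a specific pairing of insertions with deletions into single replacement pairs, and one must be careful that the at most $|S_1|$ pure insertion steps used indeed fit within the first $k$ slots of $L$ while leaving enough trailing replacement slots to cover all $\le\ell$ relocations.
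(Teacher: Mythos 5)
Your proposal follows essentially the same route as the paper: the same count of $f(k)^{2\ell+k}$ guesses for $L$, the same canonicalization (equalize solution sizes via monotonicity, do all insertions first and then only replacements, work under the relaxed budget $(**)$), and the same kernel construction via $\phi$ applied to the union $V'$ of the layer kernels; your treatment of the obstacle you flag in Part~1 is in fact no less careful than the paper's own.

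One step in your Part~2 forward direction does not work as written: you cannot assume WLOG that each $S_i$ is minimal, because replacing each layer's solution by an arbitrary minimal sub-solution can \emph{increase} the relocation count (e.g.\ if $S_i=S_{i+1}=\{a,b\}$ but the chosen minimal sub-solutions are $\{a\}$ and $\{b\}$, a transition of cost $0$ becomes cost $1$). The correct move is to set $T_i:=S_i\cap V'$: since $S_i$ contains some minimal solution $M_i\subseteq V(G_i')$ which solves $G_i'$, monotonicity gives that $T_i\cap V(G_i')\supseteq M_i$ solves $G_i'$, hence $T_i$ solves $H_i$ by the $\phi$-property, while $|T_i|\le|S_i|$ and $T_{i+1}\setminus T_i\subseteq S_{i+1}\setminus S_i$ preserve both bounds. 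With that substitution your argument is complete (and more explicit than the paper's two-sentence justification of the kernel).
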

\begin{proof}
The running time of our algorithm can be upper-bounded by
$f(k)^{2\ell + k} \tau \cdot \mathrm{poly}(n)$
as there are $2\ell + k$ numbers to be guessed from $1,\dots,f(k)$
and the algorithm takes $\tau \cdot \poly(n)$~time for each attempt.
The space bound in the theorem allows us to store
$L$, $S$ and $i$.

It remains to show that
the algorithm finds a solution with bound ($**$) for $\mathcal G$ if it exists. 
\Wilog,  we can assume that the solution for $\mathcal G$ changes the 
vertices/edges
over
time as late as possible.
In particular we first can do all insertions and then all replacements.
By our guess, we determine the vertices of the solution~$S_1$ in~$G_1$. Furthermore,
whenever the vertices in the solution set change, we also can guess
that change in our sequence of pairs. 
Knowing~$S_1$ and the sequence, our algorithm finds the solution for $\mathcal G$.
Finally note that if some instance $(G_i,k)$ is a no-instance, 
then our algorithm finds also no solution.

We now show how to turn the sequence of full kernels $(G_i')_{i \in [\tau]}$
into a kernel for the global multistage problem.
Note that the latter must have the same same vertex set in every time step.

Let $W := \bigcup_{i \in [\tau]} V(G_i')$ and note that $\abs{W} \leq f(k)\tau$.
Then $\langle \phi(G_i', W) \rangle_{i \in [\tau]}$ is a temporal graph
whose number of edges is clearly at most $(f(k)\tau)^2\tau$.
Due to the conditions placed on the function~$\phi$, 
$(\langle \phi(G_i', W) \rangle_{i \in [\tau]} , k , \ell)$ is a kernel for the \gmp{} for~$\Pp_\Xx$.
\end{proof}

The running time of our algorithm can be easily improved 
if the solution set of a monotone graph problem $\mathcal X$ only consists
of vertices---as it is often the case. Then we can 
replace the $f(k)$-term in the running time by the number of 
vertices in the kernel.

\subsection{Applications}\label{subsect:preserving_kernelVC}

\subparagraph*{Cluster Editing and Cluster Edge Deletion.} We start to show
that our framework for superset-enumerable problems can be applied very easily to the
following two clustering problems.

The first problem is \textsc{Cluster Editing} 
 (also known as \textsc{Correlation Clustering})
where the goal is to find a modification of a given graph by
deleting or adding edges such that the modified graph is a {\em cluster
graph}, i.e., all connected components of the 
modified graph are cliques (also called clusters). 

\begin{problem}[\textsc{Global Multistage Cluster Editing}]
        Given a triple $({\mathcal G},k,\ell)$ where
         $\mathcal G = \langle G_1, \ldots, G_{\tau} \rangle$ is a temporal graph,
        the goal is to
        find sets $S_1,\ldots,S_{\tau} \subseteq 
        E(\mathcal{G}) \times \{\operatorname{del}\} \cup 
        \{\{u, v \} : u,v \in V(\mathcal{G})\} 
        \times \{\operatorname{add}\}$, each of size at most $k$, 
        such that $(G_i \setminus \{e \in E(G_i) : (e, \operatorname{del})\in S_i\}) \cup \{ e : (e, \operatorname{add}) \in S_i\}$
        is a cluster graph for $i\in [\tau]$
        and the total number of
        new elements when going from $S_i$ to $S_{i+1}$ ($i \in [\tau-1]$)
        is upper-bounded by~$\ell$, that is, $\sum_{i\in [\tau -1 ]} |S_{i+1} \setminus S_i| \leq~\ell$.
\end{problem}

We mention in passing that \textsc{Cluster Editing} in temporal graphs has also been studied by Chen et al.~\cite{TemporalClusterEditing18}
and a related dynamic model by Luo et al.~\cite{LMNN18}.

The second problem is \textsc{Cluster Edge Deletion}. It is the restriction of \textsc{Cluster Editing} where one is only allowed to delete edges.
\textsc{Global Multistage Cluster Edge Deletion} is the restriction of \textsc{Global Multistage Cluster Editing} with $S_i \subseteq E(\mathcal{G}) \times \{\operatorname{del}\}$ for all $i \in [\tau]$.


\begin{corollary}
For an $n$-vertex, $\tau$-layer temporal graph $\Gg$, the
\textsc{Global Multistage Cluster Editing} and \textsc{Global Multistage Cluster Edge Deletion} instances $(\Gg, k, l)$
can both be solved in $1.82^{k(\ell + 1)} \tau n^3$~time. An extra factor of
$(\log k)$ in the running time allows us to solve both problems with 
$O( (\ell+k) k \ell + k\log n + \log \tau)$ bits.
\end{corollary}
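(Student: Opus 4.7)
The plan is to apply the superset-enumerable framework of \cref{thm:new_framework}. The key technical step is to show that both \textsc{Cluster Editing} and \textsc{Cluster Edge Deletion} are superset-enumerable of size $f(k)=O(1.82^k)$, with enumeration runnable in $O(1.82^k n^3)$ time. For \textsc{Cluster Editing}, I would build on the classical branching algorithm, which repeatedly locates an induced $P_3$ on vertices $u,v,w$ and branches on the three possible resolutions (delete $\{u,v\}$, delete $\{v,w\}$, or insert $\{u,w\}$); refined case-distinguishing rules yield the base $1.82$. To enumerate minimal solutions containing a given forced modification set $F$, I would first apply all modifications in $F$ to the graph, charging their number against the budget $k$, and then run the branching algorithm while pruning every branch that would undo a modification already in $F$. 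Since pruning only reduces the branching fan-out, the bound of $O(1.82^k)$ leaves is preserved; a linear-time filter that, for each element $e$ of each candidate, checks whether the candidate remains a valid cluster editing without $e$ discards the non-minimal supersets. For \textsc{Cluster Edge Deletion} the same procedure works with only the two deletion branches available, again with $O(1.82^k)$ minimal supersets.

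To obtain the running time $1.82^{k(\ell+1)}\tau n^3$, I would carry out a sharper count of the guessed sequences $L$ from \cref{thm:new_framework} rather than plug $f(k)$ blindly into the generic bound. Each enumeration pair $(0,y)$ in $L$ introduces at least one new element into the current set $S$, so the number of such pairs is at most $\ell+1$ (the initial guess for $S_\tau$ plus at most one per forward-direction insertion accounted for in $\ell$). Each such pair selects one of $f(k)=O(1.82^k)$ minimal supersets, contributing the factor $1.82^{k(\ell+1)}$. The remaining deletion pairs $(x,0)$ merely index into the current solution of size at most $k$, and combined with the per-attempt $O(\tau n^3)$ verification and enumeration cost they contribute only polynomial overhead, absorbed into the stated bound.

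The space analysis follows \cref{thm:new_framework}: storing $L$ requires $O((\ell+k)\log f(k))$ bits, while $S$ together with loop indices takes $O(k\log n+\log\tau)$ bits; additional bookkeeping needed to resume the enumeration on the fly accounts for the remaining terms claimed in the corollary. The extra $\log k$ factor in time arises from not materialising all $f(k)$ supersets but instead re-running the branching algorithm deterministically and skipping to the $y$th leaf whenever it is needed. The main obstacle I foresee is verifying that the intricate case split underlying the $1.82^k$ branching bound remains valid under pruning against an arbitrary forced set $F$; intuitively, since $F$ only forbids some branches, the branching vector can only improve, but a detailed review of the case analysis is needed to rule out degenerate scenarios that could worsen the base.
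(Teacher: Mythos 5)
Your overall route is the paper's: establish that both problems are superset-enumerable with $f(k)=1.82^k$ and enumeration cost $O(1.82^k+n^3)$, then invoke \cref{thm:new_framework}. The difference, and the gap, is in how you force the enumerated solutions to contain $F$. You propose applying $F$ to the graph and then pruning the unweighted $P_3$-branching tree so that no branch undoes an element of $F$, and you yourself flag that it is unclear whether the refined case analysis behind the base $1.82$ survives this pruning. That worry is justified: the $1.82^k$ bound is not a plain $(1,1,1)$-branching bound but relies on Böcker et al.'s case distinctions and interleaved reduction steps, and forbidding certain modifications changes which cases and merges are available. The paper sidesteps this entirely by working with \textsc{Weighted Cluster Editing} and assigning the modifications of $F$ (and, for \textsc{Cluster Edge Deletion}, all edge insertions) a weight exceeding the budget $k$; the $1.82^k$ analysis of the weighted algorithm then applies verbatim, with no new case analysis needed. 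So the missing idea in your write-up is precisely this reduction to the weighted problem, which is what makes the enumerability claim a citation rather than a new proof obligation.

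A second, smaller issue is your sharper count of the sequence $L$. You argue that each enumeration pair $(0,y)$ adds an element to $S$ and hence there are at most $\ell+1$ such pairs. But the algorithm of \cref{thm:new_framework} runs \emph{backwards} from layer $\tau$ to layer $1$, so an element newly added to $S$ at that point lies in $S_{i-1}\setminus S_i$, i.e.\ it is a \emph{forward deletion}; these are bounded only by $\ell+\abs{S_1}-\abs{S_\tau}\le\ell+k$, not by $\ell$ (and \textsc{Cluster Editing} is not monotone, so you cannot equalize insertions and deletions as in the monotone case). Your derivation of the exponent $k(\ell+1)$ therefore does not go through as stated. Your space discussion is in the right spirit but attributes the extra $\log k$ factor to re-simulating the search tree to reach the $y$th leaf, whereas the paper obtains it from maintaining the $O(k)$ graph modifications along a root-to-leaf path in a heap of $O(k\log n)$ bits, each access to the modified graph costing $O(\log k)$ time.
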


\begin{proof}
The claim follows directly from \Cref{thm:new_framework} and 
the fact that both problems are {$(1.82^k, n^3)$-enumerable}, which can be done by making the modification of an edge from the subset $S$ more expensive than the budget $k$,
and then applying 
B\"ocker et al.'s search-tree algorithm for \textsc{Weighted Cluster Editing}~\cite[Theorem
7]{BockerBBT09}
(note that this algorithm can also be used for \textsc{Cluster Edge Deletion} by setting the weight of adding an edge to~$k+1$).

The algorithm runs in $O(1.82^k +n^3)$ time, modifies the graph and builds a 
search tree of depth $O(k)$. Each descending step of the search tree
makes $O(1)$ changes to the graph. These changes as well as our modifications due to subset $S$ 
can be maintaines in a heap with
$O(k\log n)$ bits. Evaluating the given graph and the heap, an access to the
modified graph runs in $O(\log k)$ time. By \Cref{thm:new_framework}, the
space bound in the corollary follows.
\end{proof}


\subparagraph*{Vertex Cover.}
As a first simple application of our framework for full kernels, we consider \textsc{Global Multistage Vertex
Cover}. 
In a graph $H$, $C \subseteq V(H)$ is a {\em vertex cover} if for every
$\{u, v\}\in
E(H): v \in C$ or $u \in C$. 
In the NP-hard \textsc{Vertex Cover} problem, we are
given $(H, k)$ where $H$ is a graph and $k \in \Nat$. 
The goal is to find a vertex cover of size at most $k$ in $H$.
It is worth noting that \textsc{Vertex Cover} has already been investigated in the classical multistage scenario by Fluschnik et al.~\cite{Fluschnik2019}.

\begin{problem}[\textsc{Global Multistage Vertex Cover}]
        Given a triple $({\mathcal G},k,\ell)$ where 
         $\mathcal G = \langle G_1, \ldots, G_{\tau} \rangle$ is a temporal graph,
        the goal is to
        find sets $S_1,\ldots,S_{\tau}$, each of size at most $k$, such that $S_i \subseteq V(G_i)$
        is a vertex cover for $G_i$ ($i\in [\tau]$) and the total number of
        insertions from $S_i$ to $S_{i+1}$ ($i \in [\tau-1]$)
        is upper-bounded by $\ell$, that is, $\sum_{i\in [\tau -1 ]} |S_{i+1} \setminus S_i| \leq~\ell$.
\end{problem}

For a static $n$-vertex graph $H$, \textsc{Vertex Cover} with parameter $k$ admits
a full kernel of at most $k^2+2k$ vertices and at most $k^2 + k$ edges by applying the following three data reduction
rules (known as Buss kernelization \cite{BG93}) exhaustively~\cite{DAMASCHKE2006337}. 
Moreover, Fafianie and Kratsch~\cite{FafK14} have shown that the
kernel can be computed within $O(f(k) \log n)$ bits.
It is not hard to see that a solution can be verified with $O(\log n)$
bits.

\begin{description}
	\item[Rule 1.] Delete all isolated vertices. 
	\item[Rule 2.] If a vertex $v$ has more than $k+1$ incident edges, then delete all
                   except $k+1$~of them. 
	\item[Rule 3.] If Rules 1 and 2 cannot be applied any more and $H$ has more than $k^2 + 2k$ vertices
                 or more than $k^2 + k$ edges,
                   then conclude that the given graph is a no-instance.
\end{description}

Clearly, Rule 1 is a full reduction rule 
since an isolated vertex can never be part of any
minimal solution for~$H$. Rule 2 is also full since a vertex $v$ with $k+1$
edges is part of every solution of size~$k$, i.e., all deleted edges are
covered by $v$. If we apply Rule 3, then one can easily see that we transform a no-instance to a
no-instance. Thus, we get a 
sequence of full kernels and 
we can apply our framework as described in \cref{sect:monKernel}.

To obtain a kernel for an instance $({\mathcal G=\langle G_1, \ldots, G_{\tau} \rangle},k,\ell)$ of \textsc{Global Multistage Vertex Cover},
we can simply apply \cref{thm:framework} (ii) with the function $\phi((V, E), W) := (V \cup W, E)$
since adding isolated vertices does not affect an instance of \textsc{Vertex Cover}.
Note that the resulting kernel contains at most~$(k^2+2k)\tau$ vertices and
at most~$(k^2+k)\tau$ edges.
By \cref{thm:framework}, we can conclude the following.
\begin{corollary}
For an $n$-vertex, $\tau$-layer temporal graph $\Gg$, 
\textsc{Global Multistage Vertex Cover} on an instance $({\mathcal G },k,\ell)$
can be solved 
in $(k^2 + 2k + 1)^{2\ell + k}\tau \cdot \poly($n$)$ time
and $O((\ell + k) \log f(k) + \log \tau + f(k) \log n) = O(f'(k, \ell)\log n + \log \tau)$ bits for some computable function $f'$. 
Moreover, the problem has a
kernel of at most $(k^2+2k)\tau$ vertices and at most $(k^2+k)\tau$ edges 
computable in $O (\tau (n+m))$ time, where $m$ is the number of edges in the underlying graph.
\end{corollary}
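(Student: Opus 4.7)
The plan is to instantiate Theorem~\ref{thm:framework} for \textsc{Vertex Cover}, after verifying its hypotheses. First I would check that \textsc{Vertex Cover} is monotone (any superset of a vertex cover remains a vertex cover) and that a candidate cover can be verified in polynomial time (indeed with only $O(\log n)$ bits, by scanning one edge at a time). Second, I would argue that Buss kernelization (Rules~1--3) produces a full kernel of size $f(k) = k^2 + 2k$, as already established in~\cite{DAMASCHKE2006337} and recalled in the paragraph preceding the corollary: Rule~1 deletes only isolated vertices, which cannot appear in any minimal cover; Rule~2 trims superfluous edges incident to a vertex of degree at least $k+1$, which must belong to every size-$k$ cover; and Rule~3 soundly rejects instances whose reduced graph exceeds the allowed size.

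For the running time in part~(i) of the corollary, I would substitute $f(k) = k^2 + 2k$ into Theorem~\ref{thm:framework}(i) to obtain $(k^2 + 2k + 1)^{2\ell + k}\tau \cdot \poly(n)$. The space claim follows by adding the $O((\ell + k)\log f(k) + \log \tau)$ bits consumed by the framework itself to the $O(f(k)\log n)$-bit kernel computation of Fafianie and Kratsch~\cite{FafK14} and to the $O(\log n)$-bit edge-by-edge verifier; collecting terms yields $O(f'(k, \ell)\log n + \log \tau)$ bits for a suitable computable function~$f'$.

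For the kernel, I would apply Theorem~\ref{thm:framework}(ii) using the embedding $\phi((V, E), W) := (V \cup W, E)$, which simply adds the vertices of $W \setminus V$ as isolated vertices. Since isolated vertices appear in no edge, $S \subseteq V \cup W$ covers $\phi(G, W)$ if and only if $S \cap V$ covers $G$, so $\phi$ satisfies the hypothesis of the theorem. This immediately bounds the kernel by $f(k)\tau = (k^2 + 2k)\tau$ vertices. The tighter edge bound $(k^2 + k)\tau$, better than the $(f(k)\tau)^2\tau$ guaranteed by the generic theorem, follows by observing that $\phi$ introduces no edges, so each of the $\tau$ layers inherits only the at-most-$(k^2 + k)$ edges of its Buss kernel. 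The kernel itself is produced in $O(\tau(n + m))$ time by running Buss' linear-time rules on each layer.

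The most delicate part is really the bookkeeping in the space bound, where the $O(f(k) \log n)$-bit kernel storage of~\cite{FafK14} has to be integrated with the framework's own space accounting without double-counting; and the minor observation that the edge count improves from the generic quadratic bound of Theorem~\ref{thm:framework}(ii) to $(k^2 + k)\tau$ precisely because $\phi$ is edge-preserving. Conceptually, the corollary is a direct instantiation of the machinery already set up in Sections~\ref{sect:monKernel} and~\ref{subsect:preserving_kernelVC}.
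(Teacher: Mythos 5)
Your proposal is correct and follows essentially the same route as the paper: it verifies that Buss kernelization yields a full kernel of size $k^2+2k$, invokes \cref{thm:framework}(i) for the running time and space bound (citing Fafianie--Kratsch for the low-space kernel computation), and applies \cref{thm:framework}(ii) with $\phi((V,E),W) := (V\cup W, E)$ for the kernel, including the observation that the edge bound improves to $(k^2+k)\tau$ because $\phi$ adds only isolated vertices. No gaps.
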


\subparagraph*{Path Contraction.}
As a second application of our framework for full kernels, we consider the NP-hard problem \textsc{Path Contraction} \cite{HegHLLP}.
For a graph $H$ and a subset of its edges $C \subseteq E(H)$, 
we write $H/C$ for the graph obtained from~$H$ by contracting each edge in $C$.
(\emph{Contracting} an edge means merging its endpoints into a single vertex and removing any loops or parallel edges afterwards.)
In the \textsc{Path Contraction} problem, we are
given a graph~$H$ and an integer $k \in \Nat$, and the goal is to
find $C \subseteq E(H)$ with $|C| \le k$ such that every connected component of $H/C$ is a path.
Again, the multistage adaption is straightforward.

\begin{problem}[\textsc{Global Multistage Path Contraction}]
        Given a triple $({\mathcal G}, k, \ell)$ where 
         $\mathcal G = \langle G_1, \ldots, G_{\tau} \rangle$ is a temporal graph, 
         the goal is to
   decide whether there exist~$S_1, \ldots, S_{\tau}$ 
   with $S_i \subseteq E(G)$ ($G$ being the underlying graph of $\mathcal G$), 
   each of size at most~$k$, such that $G_i/(S_i \cap E(G_i))$ is a disjoint union of paths for every $i \in [\tau]$,
   and the total number of insertions from~$S_i$ to~$S_{i+1}$ 
   is upper-bounded by $\ell$, that is, $\sum_{i\in [\tau -1 ]} |S_{i+1} \setminus S_i| \leq \ell$.
\end{problem}

\textsc{Path Contraction} has a problem kernel with
at most~$5k + 3$ vertices and at most~$(5k + 3)^2$ edges 
with respect to solution size~$k$~\cite{HegHLLP} by applying the data reduction rules
below on each instance~$(H, k)$ of \textsc{Path Contraction}.
The kernel can be computed within $O(f(k) \log n)$ bits~\cite{KamS20}.
We show that this kernel is also a full kernel.

\begin{description}
	\item[Rule 1.] If any connected component of $H$ contains an edge $e=\{u, v\} \in E(H)$
whose removal disconnects it into two connected components
that contain at least $k + 2$ vertices each, then contract the edge $e$.
	\item[Rule 2.] If Rule 1 is not applicable and any connected component has more than $5k + 3$ vertices, then conclude that
	$(H,k)$ is a no-instance.
\end{description}

Rule~1 is a ``full reduction rule'' since 
every path obtained from $H$ after the contraction of $k$ edges has some vertices
connected to $u$ and some connected to $v$ so that it does not make sense to
contract edge $e$. In other words, every minimal solution of size at most
$k$ must not contain edge~$e$. Rule~2 is surely full.

To obtain a kernel for an instance of \textsc{Global Multistage Path Contraction}, observe as before that adding isolated vertices does not affect the solution.
Hence, the kernel is constructed in the same manner as shown for \textsc{Global Multistage Vertex Cover}.
Thus,  with \cref{thm:framework} we get the following.

\begin{corollary}
For an $n$-vertex, $\tau$-layer temporal graph $\Gg$, 
\textsc{Global Multistage Path Contraction} on an instance $({\mathcal G},k,\ell)$ 
can be solved 
in $(5k + 4)^{4\ell + 2k}\tau \cdot \poly(n)$ time and using $O(f'(k, \ell)\log n + \log \tau)$ bits for some computable function $f'$.
Moreover, the problem has a
kernel of at most $(5k + 3)\tau$ vertices and at most $(5k + 3)^2\tau$  edges computable in $O(\tau n (n+m))$ time, where $m$ is the number of edges in the underlying graph.
\end{corollary}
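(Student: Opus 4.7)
The plan is to invoke Theorem~\ref{thm:framework} with the kernel of~\cite{HegHLLP} discussed just above, after verifying its preconditions for \textsc{Global Multistage Path Contraction}. First I would check that \textsc{Path Contraction} is monotone: if $G/S$ is a disjoint union of paths and $S\subseteq S'\subseteq E(G)$, then $G/S'$ arises from $G/S$ by further contractions of the edges in $S'\setminus S$ (identified with their images), and contracting edges in a disjoint union of paths yields another disjoint union of paths. The fullness of Rules~1 and~2, yielding a kernel of at most $5k+3$ vertices and at most $(5k+3)^2$ edges, has already been argued in the preceding paragraphs, and a candidate solution $S$ can be verified in polynomial time by performing the contractions and checking that every resulting component has maximum degree at most two and is acyclic.

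For part~(i) of Theorem~\ref{thm:framework}, since \textsc{Path Contraction} solutions consist of edges, the appropriate value for $f(k)$ is the maximum number of edges in the full kernel, namely $f(k) \leq (5k+3)^2$. Substituting into the theorem yields a running time of
\[
 \bigl((5k+3)^2\bigr)^{2\ell+k}\cdot\tau\cdot\mathrm{poly}(n)\;\leq\;(5k+4)^{4\ell+2k}\cdot\tau\cdot\mathrm{poly}(n),
\]
using $(5k+3)^2\leq (5k+4)^2$, which matches the claimed bound. The space bound follows directly from the theorem together with the logspace kernelization of~\cite{KamS20} and the fact that checking the path-component property on a polynomial-size graph costs only $O(\log n)$ bits per elementary operation.

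For the kernelization statement I would apply Theorem~\ref{thm:framework}(ii) with the map $\phi(G,W) := (V(G)\cup W, E(G))$, that is, adjoining the vertices of $W$ as isolated vertices. This choice is valid because an isolated vertex already forms a trivial (one-vertex) path component, so an edge set $S$ contracts $G$ to a disjoint union of paths if and only if it contracts $\phi(G,W)$ to one. Setting $W := \bigcup_{i\in[\tau]} V(G'_i)$ yields $|W|\leq (5k+3)\tau$, and since each kernelized layer carries at most $(5k+3)^2$ edges the total edge count is at most $(5k+3)^2\tau$. Each single-layer full kernel can be computed in $O(n(n+m))$ time by~\cite{HegHLLP}, so aggregating over the $\tau$ layers gives the claimed $O(\tau n(n+m))$ total.

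The only conceptual subtlety, and the point that deserves the most care, is that Theorem~\ref{thm:framework}(ii) is phrased with vertex-set solutions in mind (``$S\subseteq V\cup W$''), whereas \textsc{Path Contraction} has edge-set solutions. This is routine to reconcile: because $\phi$ leaves the edge set untouched, the edge set of any solution in $\phi(G,W)$ automatically lies in $E(G)$, and the required equivalence ``$S$ solves $\phi(G,W)$ iff $S$ solves $G$'' is immediate. Hence the argument of Theorem~\ref{thm:framework}(ii) transfers verbatim to the edge-set setting, completing the proof plan.
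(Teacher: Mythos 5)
Your proposal is correct and follows essentially the same route as the paper: establish that Rules~1 and~2 give a full kernel for the monotone problem \textsc{Path Contraction}, apply Theorem~\ref{thm:framework}(i) with $f(k)=(5k+3)^2$ (the number of kernel \emph{edges}, since solutions are edge sets), and obtain the temporal kernel via Theorem~\ref{thm:framework}(ii) with $\phi$ adjoining isolated vertices. Your explicit handling of the edge-set-versus-vertex-set issue and of the per-layer edge bound $(5k+3)^2$ (which is why the total edge count is $(5k+3)^2\tau$ rather than the theorem's generic $(f(k)\tau)^2\tau$) matches what the paper does implicitly.
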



\section{Parameterized Hardness for NP-hard Problems Gone Globally
Multistage}\label{sect:w-hardness}
In this section, we explore the limitations with respect to achieving fixed-parameter tractability results of the
global multistage versions of NP-hard
problems. It is clear that the global multistage version of an NP-hard 
problem is NP-hard
for $\tau = 1$ and $\ell = 0$ relocations. We further show that 
the global multistage versions of \textsc{Vertex
Cover}, \textsc{Path Contraction}, and \textsc{Cluster Edge Deletion} are W[1]-hard with respect to solution
size~$k$. For the global multistage versions of \textsc{Planar Dominating Set} and \textsc{Planar Edge Dominating Set}, we even show
W[2]-hardness with respect to the combined parameter solution size~$k$ plus 
number~$\ell$ of relocations.

\subsection{Hardness for the Parameter~$k$}\label{subsect:w1_hardnessVC}
In this subsection, we show that the global multistage versions of \textsc{Vertex Cover}, \textsc{Path Contraction} and \textsc{Cluster Edge Deletion}
become $\wone$-hard with respect to solution size~$k$.
This is each time done by a reduction from \textsc{Clique},
which is well-known to be $\wone$-complete with respect to solution size~\cite{DF13}.
In \textsc{Clique}, given a graph $H$ and a number~$\tilde{k}$, the question is whether~$H$ contains~$\tilde{k}$ pairwise adjacent vertices.
We prove the following. 
\begin{theorem}\label{thm:wone-vc-ced}
 \textsc{Global Multistage Vertex Cover}, \textsc{Global Multistage Path Contraction}, and \textsc{Global Multistage Cluster Edge Deletion}
 are $\wone$-hard parameterized by the solution size~$k$.
\end{theorem}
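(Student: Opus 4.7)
The plan is to provide a parameterized reduction from \textsc{Clique} (parameterized by clique size~$\tilde k$) to each of the three global multistage problems, mapping $\tilde k$ to a target solution size $k = O(\tilde k)$ while keeping $\ell$ and $\tau$ polynomial in $\tilde k$ (or in the input). The overall scheme is: build a temporal graph whose layers encode ``positions'' of a candidate clique, so that a cheap per-layer solution commits to a choice of representatives, while the global budget~$\ell$ forces these choices to be consistent across layers and to be pairwise adjacent in~$H$.

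Concretely, for \textsc{Global Multistage Vertex Cover}, given $(H,\tilde k)$ with $V(H) = \{v_1, \ldots, v_n\}$, I would introduce one temporal-graph vertex~$u_i$ per vertex~$v_i$ of~$H$ together with small anchor gadgets, and build $\binom{\tilde k}{2}$ ``verification layers''~$L_{ij}$, one per unordered position pair $\{i,j\}$. Each~$L_{ij}$ contains a \emph{selection gadget} forcing the per-layer vertex cover to include exactly one representative for position~$i$ and one for position~$j$ from among $\{u_1,\ldots,u_n\}$, together with a \emph{verification gadget} that adds, for every non-edge $\{v_a,v_b\}$ of~$H$, an edge structure making the cover infeasible whenever both~$u_a$ and~$u_b$ happen to be the chosen representatives. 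Setting $k = O(\tilde k)$ and $\ell = O(\tilde k^2)$ charges exactly for changes of representative across the $\binom{\tilde k}{2}$ layers, so a feasible sequence of covers commits to a single $v \in V(H)$ per position such that any two chosen representatives are adjacent in~$H$, i.e., a $\tilde k$-clique.

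The reductions for \textsc{Global Multistage Path Contraction} and \textsc{Global Multistage Cluster Edge Deletion} follow the same template, but the selection and verification gadgets are adapted to the respective operations: for path contraction, a selection gadget is a small tree in which contracting a specific edge keeps the quotient a disjoint union of paths while contracting any other choice breaks that property; for cluster edge deletion, the selection gadget is a near-cluster graph where removing one specific edge restores cluster structure, and the verification gadget is an anti-cluster pattern spanning the two chosen representatives. The main obstacle will be the precise combinatorial design of these gadgets, so that (i)~the cheapest per-layer solution really encodes the intended selection of two representatives rather than cheating via irrelevant vertices/edges, (ii)~transitions between layers are tightly metered by the global budget (so that the $O(\tilde k^2)$ allowance suffices exactly for one switch per position transition), and (iii)~both~$k$ and~$\ell$ grow with~$\tilde k$ alone, independent of $|V(H)|$, which is essential for a valid parameterized reduction.
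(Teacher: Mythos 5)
Your proposal contains a structural flaw that makes it unworkable before any gadget design is attempted. You insist in point~(iii) that both $k$ and $\ell$ must be bounded by functions of~$\tilde k$ alone, and you additionally set $\tau = \binom{\tilde k}{2}$. First, this requirement is \emph{not} part of the definition of a parameterized reduction: when reducing from \textsc{Clique} (parameter~$\tilde k$) to \textsc{Global Multistage Vertex Cover} (parameter~$k$), only the target parameter~$k$ must be bounded by a computable function of~$\tilde k$; the budget~$\ell$ and the lifetime~$\tau$ are ordinary input quantities and are free to scale with~$n$ and~$m$. Second, and worse, the requirement is not merely superfluous but impossible to satisfy: if you built a reduction with $k = O(\tilde k)$, $\ell = O(\tilde k^2)$, and $\tau = O(\tilde k^2)$, it would simultaneously establish $\wone$-hardness for the combined parameters $k+\ell$ and $k+\tau$. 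That contradicts the paper's own positive results --- all three problems are fixed-parameter tractable in~$k+\ell$ via the superset-enumerability framework, and \textsc{Vertex Cover} and \textsc{Path Contraction} even admit polynomial kernels in~$k+\tau$. So the approach as specified cannot succeed, regardless of how the selection and verification gadgets are filled in.

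The paper's reduction is designed precisely to avoid this trap: only $k = 4\tilde k^2 + 1$ is a function of~$\tilde k$, while $\tau = m(8\tilde k^2 n + 1) + 1$ and $\ell = 4\tilde k^2 + 8m\tilde k^2(n-\tilde k) + (m - \binom{\tilde k}{2})$ both grow with $n$ and $m$. Each $v_i\in V(H)$ is blown up into a \emph{copy set}~$V_i$ of $4\tilde k$ vertices, making it costly (in insertions) to switch which copy sets sit in the cover. Each edge $e=\{v_p,v_q\}$ of~$H$ gets an \emph{edge gadget} of $8\tilde k^2 n + 1$ layers, each a star on a fresh center attached to one copy set, with the final layer attached to both $V_p$ and~$V_q$. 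Keeping $\tilde k$ fixed copy sets permanently in the cover is the cheapest strategy, and the last layer of each edge gadget demands its center vertex in the cover \emph{unless} both copy sets $V_p,V_q$ are already there. The budget is tuned so that exactly $\binom{\tilde k}{2}$ such savings can be afforded, which forces the $\tilde k$ chosen vertices to be pairwise adjacent in~$H$. The long runs of $8\tilde k^2 n$ essentially identical layers inside each edge gadget are essential: they make it unprofitable to swap a copy set in and out temporarily, so the choice of $\tilde k$ representatives is frozen after the first transition. That is exactly the consistency enforcement you were aiming for, but achieved by making $\ell$ input-dependent rather than small.

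A secondary concern, independent of the budget issue, is your ``selection gadget forcing the per-layer vertex cover to include exactly one representative'' while keeping $k = O(\tilde k)$. \textsc{Vertex Cover} is monotone: adding more vertices to a cover never breaks feasibility, so no gadget can upper-bound which vertices are selected. Moreover, any constant-sized gadget that forces ``at least one of $n$ substitutable vertices $u_1,\dots,u_n$'' into the cover would also be covered by a single auxiliary vertex adjacent to all of them, defeating the selection without picking any $u_i$. The paper sidesteps this by making each representative cost $4\tilde k$ cover slots (the copy set) and by using the edge gadgets rather than a single ``verification'' layer to meter consistency.
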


For the proof of \Cref{thm:wone-vc-ced}, we describe our parameterized reduction from \textsc{Clique} using \textsc{Global Multistage Vertex Cover} in full detail as an example.
We highlight the necessary properties of the corresponding vertex gadget and the edge gadget and prove the correctness of the reduction based on these properties.
Afterwards, we show how to build the vertex gadgets and the edge gadgets for \textsc{Path contraction} and \textsc{Cluster Edge Deletion} based on similar ideas.

Before going into details of the reductions,
we remark that \cref{thm:wone-vc-ced} also implies the \wone-hardness of a variant of the global multistage scenario,
in which the solution for two successive time layers may differ only by $q\ge 1$ elements.
A formal definition of this variant is as follows.

\begin{definition}[\textsc{Global Multistage with $q$-Local Budget} for graph property $\mathcal{P}_X$]
  Given an instance~$(\mathcal G, k, \ell)$ of \textsc{Global Multistage} for graph property $\mathcal{P}_X$,
  the goal is to find a solution $\mathcal{S} = (S_1, \dots, S_\tau)$ to the \textsc{Global Multistage} 
  problem~$(\mathcal G, k, \ell)$
  with the additional restriction that $|S_{i+1} \setminus S_i| \le q$ for all $i\in [\tau-1]$.
\end{definition}

We then obtain the following corollary to \cref{thm:wone-vc-ced}.

\begin{corollary}
  \textsc{Global Multistage Vertex Cover with Local Budget}, \textsc{Global Multistage Path Contraction with Local Budget}, and \textsc{Global Multistage Edge Deletion 
   with $q$-Local Budget} parameterized by $k$ are \wone-hard for any fixed $q\ge 1$.
\end{corollary}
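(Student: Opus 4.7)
The plan is to reduce, for each of the three problems, the \wone-hardness from \cref{thm:wone-vc-ced} (global-budget version, parameterized by $k$) to the $q$-local-budget version by padding the produced temporal graph with trivial intermediate layers. First, I would take an instance $(\mathcal{G}, k, \ell)$ produced by the reduction underlying \cref{thm:wone-vc-ced} and, between every pair of consecutive layers $G_i, G_{i+1}$, insert $\lceil \ell/q \rceil$ copies of an \emph{edgeless layer} on the same vertex set. The parameters $k$ and $\ell$ stay the same; since the reduction produces polynomially bounded $\ell$ and $\tau$, the overall blowup is polynomial, so this yields a valid parameterized reduction with $k$ as the parameter.

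Next, I would verify that an edgeless layer is trivially feasible for all three problems: every vertex subset (of size at most $k$) is a vertex cover of an edgeless graph; the empty graph is already a cluster graph, so no edge modifications are needed; and contracting no edges leaves isolated vertices, which are paths. Moreover, inserting edgeless layers leaves the underlying graph unchanged, which matters for \textsc{Path Contraction}, where $S_i \subseteq E(G)$ is constrained by the underlying graph $G$.

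The correctness argument would then go in two directions. In the forward direction, given a solution $(S_1, \ldots, S_\tau)$ of the original instance with $t_i := \abs{S_{i+1} \setminus S_i}$ and $\sum_i t_i \leq \ell$, I would pick for each gap a chain $S_i = T^{(i)}_0, T^{(i)}_1, \ldots, T^{(i)}_{m_i} = S_{i+1}$ of length $m_i = \lceil t_i/q \rceil \leq \lceil \ell/q \rceil$ in which every transition inserts at most $q$ new elements (deletions come for free), slotting the intermediate sets into the padded layers; the total number of insertions is preserved, and the $q$-local budget is met by construction. In the reverse direction, restricting any solution of the padded instance to the positions of the original layers yields a valid solution of the original, since by the elementary subadditivity $\abs{S_{i+1} \setminus S_i} \leq \sum_j \abs{T^{(i)}_{j+1} \setminus T^{(i)}_j}$ for any chain, the original global budget $\ell$ is respected.

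The step I expect to demand the most care is the reverse direction for \textsc{Path Contraction}, where $S_i$ must be a subset of the underlying graph's edge set rather than of $E(G_i)$; this is handled by the observation that edgeless layers contribute no edges to the underlying graph, so $E(G)$ is unchanged by padding. Since $k$ stays fixed and $q$ is a constant, the \wone-hardness of each of the three problems transfers directly to its $q$-local-budget variant.
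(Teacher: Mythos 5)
Your proposal is correct and follows essentially the same route as the paper, which also pads the instance from \cref{thm:wone-vc-ced} with empty (edgeless) layers between consecutive layers so that the insertions can be spread out to meet the local budget; the only difference is that the paper inserts $k$ empty layers per gap (enough since at most $|S_{i+1}| \le k$ elements are ever inserted between two consecutive layers) whereas you insert $\lceil \ell/q \rceil$, both of which give a polynomial blowup. Your more explicit treatment of the two directions (chains of intermediate sets, subadditivity of $|\cdot\setminus\cdot|$, and the unchanged underlying graph for \textsc{Path Contraction}) is sound.
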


\begin{proof}
  Given an instance \textsc{Global Multistage Vertex Cover (Path Contraction, Cluster Edge Deletion)}, 
  add $k$ {\em empty layers}, i.e., $k$ layers with no edges between any two successive layers of the input temporal graph.

  Since any set of vertices is a solution to the added empty layers, all changes to the solution set can be performed element by element in the empty layers.
  Thus, the instances are equivalent.
%
\end{proof}

\subsubsection{Vertex Cover}
\label{subsect:correctnessVC-hardness}

Let $(H,\tilde{k})$ be an instance of \textsc{Clique}.
Let $V(H) = \{v_1, v_2, \dots, v_n\}$ and $E(H) = \{e_1, e_2, \dots, e_m\}$.
We construct a temporal graph $\mathcal{G}$ as follows.

\subparagraph*{Vertex gadget.} 
For each vertex $v_i\in V(H)$, the graph $\mathcal G$ contains a set $V_i$ of $4\tilde{k}$ vertices $v_i^1, \dots, v_i^{4\tilde{k}}$.
We call this set $V_i$ the \emph{copy set} of $v_i$.

\subparagraph*{Edge gadget.}
An edge gadget $\mathcal G_e$ for an edge $e = \{v_p, v_q\}$ in $H$ adds
$8\tilde{k}^2 n + 1$ layers and the vertices $w_e^1, w_e^2, \dots, w_e^{8\tilde{k}^2n}, w_e^*$ to the temporal graph.
For $i \in [n]$ and $j\in [8\tilde{k}^2]$, in the $(i + jn)$-th layer of $G_e$, 
vertex $w_e^{i+jn}$ is the only non-isolated
vertex not contained in a copy set. It is connected to all vertices in $V_i$.
In the $(8\tilde{k}^2 n + 1)$-st layer, vertex $w_e^*$ is the only non-isolated
vertex not contained in a copy set.
It is connected to all vertices in~$V_q$ and~$V_p$.
Let $W$ be the set of vertices consisting of all $w_e^i$ and all $w_e^*$ 
over all edges $e\in E(H)$ and $i \in [8\tilde{k}^2n]$.
See \Cref{fig:edgeGadgetVC} for an example.

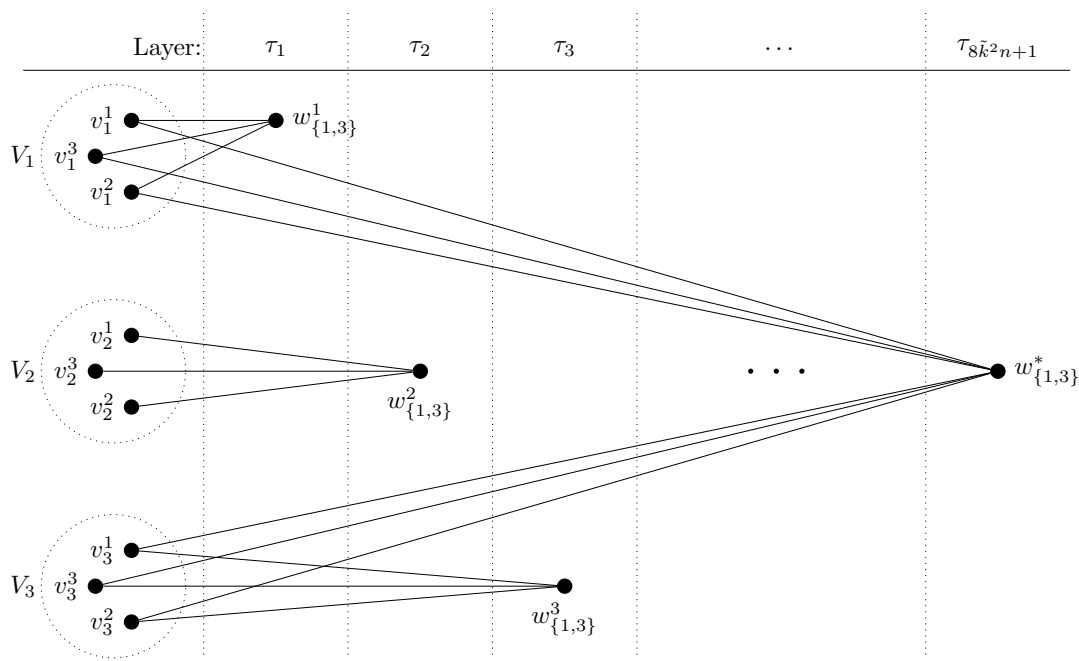
\begin{figure}[t!]
    \scalebox{0.95}{
    \begin{tikzpicture}
      \node[vertex, label=180:$v_1^1$] (v11) at (2,0) {};
      \node[vertex, label=180:$v_1^2$] (v12) at ($(v11) - (0, 1)$) {};
      \node[vertex, label=180:$v_1^3$] (v13) at ($(v11) + (-0.5, -0.5)$) {};
      \draw[dotted] (1.75, -0.5) circle (1);
      \node (c1) at (0.5, -0.5) {$V_1$};
      
      \node[vertex, label=180:$v_2^1$] (v21) at ($(v11) - (0, 3)$) {};
      \node[vertex, label=180:$v_2^2$] (v22) at ($(v21) - (0, 1)$) {};
      \node[vertex, label=180:$v_2^3$] (v23) at ($(v21) + (-0.5, -0.5)$) {};
      \draw[dotted] (1.75, -3.5) circle (1);
      \node (c2) at (0.5, -3.5) {$V_2$};
      
      \node[vertex, label=180:$v_3^1$] (v31) at ($(v21) - (0, 3)$) {};
      \node[vertex, label=180:$v_3^2$] (v32) at ($(v31) - (0, 1)$) {};
      \node[vertex, label=180:$v_3^3$] (v33) at ($(v31) + (-0.5, -0.5)$) {};
      \draw[dotted] (1.75, -6.5) circle (1);
      \node (c3) at (0.5, -6.5) {$V_3$};
      
      \node[vertex, label=0:$w^1_{\{1,3\}}$] (w1) at (4, -0.) {};
      
      \draw (v11) edge (w1);
      \draw (w1) edge (v12);
      \draw (v13) edge (w1);
      
      \node[vertex, label=270:$w^2_{\{1,3\}}$] (w2) at (6, -3.5) {};
      
      \draw (v21) edge (w2);
      \draw (w2) edge (v22);
      \draw (v23) edge (w2);
      
      \node[vertex, label=270:$w^3_{\{1,3\}}$] (w3) at (8, -6.5) {};
      
      \draw (v31) edge (w3);
      \draw (w3) edge (v32);
      \draw (v33) edge (w3);
      
      \node[vertex, label=0:$w^*_{\{1,3\}}$] (ve) at (14, -3.5) {};
      \draw (ve) edge (v11) edge (v12) edge (v13) edge (v31) edge (v32) edge (v33);

      \node (t1) at ($(w1) + (0, 1.)$) {$\tau_1$};
      \node (t2) at ($(w2) + (0, 4.5)$) {$\tau_2$};
      \node (t3) at ($(w3) + (0, 7.5)$) {$\tau_3$};
      
      \node (tl) at ($(ve) + (0, 4.5)$) {$\tau_{8\tilde{k}^2 n + 1}$};
      
      \node (tdots) at ($0.5*(t3) + 0.5*(tl)$) {\dots};
      
      \node (tdots) at ($(tdots) + (0, -4.5)$) {\Huge\dots};
      
      \node (time) at (2.5, 1) {Layer:};
      
      \draw (3, 1.5) edge[dotted] (3, -7.5);
      \draw (5, 1.5) edge[dotted] (5, -7.5);
      \draw (7, 1.5) edge[dotted] (7, -7.5);
      \draw (9, 1.5) edge[dotted] (9, -7.5);
      \draw (13, 1.5) edge[dotted] (13, -7.5);
      
      \draw (0.5, 0.7) edge (15, 0.7);
    \end{tikzpicture}}

  \caption{An example of the edge gadget for the edge $e=\{v_1, v_3\}$ in a graph with three vertices~$v_1$,~$v_2$, and $v_3$ for $\tilde{k} = 3$.
  The $8 \tilde{k}^2$ repetitions of layers 1 to 3 (with different center vertices!) are not drawn to keep the figure readable.
  Furthermore, each copy set contains only three instead of twelve vertices.
  The edges are only contained in the layer in which the right end vertex
  ($w^i_e$ or~$w_e^*$) lies.
  The copy sets are encircled by dotted lines.}
  \label{fig:edgeGadgetVC}
\end{figure}

\subparagraph*{Instance.}
In the first time layer, we have a clique consisting of $4\tilde{k}^2 + 2$
new vertices $V^{\mathrm{first}}$.
The vertices of this clique are isolated in all other layers.
The temporal graph contains $m$ edge gadgets, one for each edge of $H$.
We set~$k: = 4\tilde{k}^2 + 1$ and $\ell = 4\tilde{k}^2 + 8 m \tilde{k}^2 (n - \tilde{k}) + (m - \binom{\tilde{k}}{2})$.
Note that all layers but the first 
contain one star with a center being a vertex $w\in W$ called \emph{center vertex}.

Before we discuss properties of a solution of our \textsc{Global Multistage Vertex Cover} instance, we want to make some basic observation concerning the solution in each layer. 
First, for any~$i\in [n]$ and $j\in [8 \tilde{k}^2]$, in the $(i+jn)$-th layer of an edge gadget for an edge~$e$, the whole copy set $V_i$ is contained in~$S_{i+ jn}$, or~$w^{i+jn}_e$ is contained in $S_{i + jn}$.
Vice versa, both~$V_i$ and~$w_{e^{i+jn}}$ form a vertex cover in this layer.
Each edge gadget contains $8\tilde{k}^2$ of such layers for each vertex~$v_i\in V(H)$.

 Also note that in the last layer of an edge gadget for an edge $e = \{v_p, v_q\}$, the copy sets~$V_p$ and $V_q$ are contained in $S_{8\tilde{k}^2n + 1}$, or $w_e^*$ is contained in $S_{8\tilde{k}^2 n + 1}$. Vice versa, both~$V_p \cup V_q$ and $w_e^*$ are a vertex cover in this layer.

\subparagraph*{Properties of the reduction.}

We start to prove that every solvable instance has a solution
$S_1, \dots, S_\tau$
with the following properties. Note that the modifications done to prove one property do
not destroy the each time earlier properties.

\begin{enumerate}

  \item $S_1$ consists of all but one of the $k + 1$ vertices inducing the clique in the first
layer.
  \label{obs:first-layer}\\
  {\bf Proof.} To cover the edges of the clique of size $4\tilde{k}^2 + 2=k+1$ in the first layer, we need~$k$ vertices, i.e., the maximum number of allowed vertices.

  \item For each layer $i$, $|S_i \cap W |\le 1$.
  \label{obs:isolated}\\
{\bf Proof.} By \Cref{obs:first-layer}, we can postpone adding $w\in W$ 
into the solution set until 
$w$ is not isolated in some layer $i$. Since $w$ is isolated in all
subsequent layers $j>i$, we can replace $w$ whenever another vertex $w'\in W$ 
is not isolated and should be added to 
the solution set.

  \item 
        $S_2$ contains $\tilde{k}$ copy sets.  \label{obs:further-layer} \\
  {\bf Proof.} The vertices in $S_1$ are isolated in all layers but the first one.
  Thus we may assume that these vertices are relocated before any vertices from $\bigcup_i V_i$.
  Furthermore we may assume that all these relocations already happen in the second layer as postponing them does not gain anything.
  Note that relocating $\abs{V_i}$ vertices from $S_1$ to some copy set $V_i$ initially costs $4\tilde{k}$ relocations,
  but overall saves $8\tilde{k}^2$ relocations inside $W$ in the first edge gadget.
  Thus, each copy set entirely contained in $S_2$ decreases the overall cost of the first edge gadget.
  Since $k = \tilde{k} |V_i| + 1$, this means that $S_2$ contains exactly $\tilde{k}$ copy sets entirely.

   \item For each layer $i$ and each copy set $V_j$: either $|S_i \cap V_j|\le 1$ or $V_j \subseteq S_i$
      \label{obs:completeCS} \\
    {\bf Proof.} Let us consider the smallest $i$ for which the property is
     not true. Assume that only some vertices of the copy set $V_j$ are added to $S_i$.
     Since  $V_j$ is not complete, we can postpone adding the vertices of $V_j$ to the solution set. 
     By induction, copy sets are added only completely to the solution set.

     Assume now that a removal removes vertices from $V_j$ such that more than one vertex of~$V_j$ remains in $S_i$.
     Then this happened since we added a complete copy set $V_p$ into the solution set.
     Thus, there are other copy sets $V_{j_1},\ldots,V_{j_x}$ from which parts were removed from $S_i$. If one of the copy sets $V_{j^*}$ is
     never re-added completely (or if this only happens after a complete removal of $V_{j^*}$), then we can also remove this copy set completely instead of
     removing several copy sets partly.
     Otherwise, let $V_{j^*}$ be the copy
       set that is completely re-added last in a layer $i'>i$.
       If the number of vertices from $W\cup S_1$ that are in the solution
       does not decrease from layer~$i-1$ to layer~$i$,
       then we can completely remove
       $V_{j^*}$ in layer $i$ and re-add it completely in layer $i'$ in such a way that the solution in layer $i'$ is again $S_{i'}$.
       As any copy set which is completely contained in the solution in one layer is again completely contained in the solution of this layer, the solution remains valid. Moreover, it is not hard to see that
       the number of relocations does not increase by this change. (In layer $i$, both solutions perform $4\tilde{k}$ changes.
       The new solution performs $4\tilde{k}$~changes in layer $i'$, while the old solution performs $4\tilde{k}$ changes
       that the new solution does not perform to completely contain $V_{j_1}, \ldots, V_{j_x}$ at some point inside the layers $i+1, \dots, i'$.)

      If the number of vertices from $W\cup S_1$ being in the solution
       decreases from layer~$i-1$ to layer~$i$,
       then we remove all except one vertex of~$V_{j^*}$ together with the vertex from~$W \cup S_1$ and add $V_p$ in layer $i$.
       We add the $4\tilde{k}-1$ vertices of $V_{j^*}$ again in layer $i'$ in such a way that the solution in layer $i'$ is again
       $S_{i'}$. 
       Again, the number of relocations does not increase.  (In layer $i$,
       both solutions perform~$4\tilde{k}$ changes.  The new solution performs
       $4\tilde{k} - 1$ changes in layer $i'$ as the last vertex of $V_{j^*}$
       is not
       relocated between layer $i$ and~$i'$, 
       but does not perform the $4\tilde{k}-1$ changes to completely contain $V_{j_1}, \dots, V_{j_x}$ at some point
       inside the layers $i+1, \dots, i'$.)

       By applying this change exhaustively, this results in a solution fulfilling the desired property.

   \item $S_i\setminus S_{i-1} \subseteq W$ for any $i\ge 3$.
\label{obs:changes}\\
  {\bf Proof.}
Assume that the property is not true, i.e., some vertices of a complete copy set become part of $S_i\setminus
S_{i-1}$ for some layer $i$ ($i> 2$). Choose $i$ as large as possible such
that a first vertex of a copy set $V_j$ becomes part of the solution, i.e., $S_{i-1} \cap V_j = \emptyset$ and $S_{i} \cap V_j \neq \emptyset$.

We can 
undo the whole relocation of the copy set, which saves $4\tilde{k}$
relocations, but causes some additional relocations:
In the worse
case, we have to pay 
(1) for one more relocation in the current edge gadget (if the newly added copy set has a larger index than the removed
one) and (2) one extra relocation in the last layer of $\tilde{k}-1$ edge
gadgets. To see~(2) observe that
the new copy set can be combined with at most $\tilde{k}-1$ complete other copy
sets in the current solution (\Cref{obs:completeCS})
and so help us to save the extra relocation for $\tilde{k}-1$ 
 edge gadgets.

  \item Every solution for (all layers of) the edge gadget~$\mathcal G_{\{v_i,v_j\}}$ needs at least $8\tilde{k}^2 (n-\tilde{k})$ relocations.
Unless the copy sets $V_i$ and~$V_j$ are in the solution strictly
before the first layer of that gadget begins ({\em good case}),
at least one additional {\em extra relocation} is necessary.
\label{obs:edge-layer}\\
{\bf Proof.}
By \Cref{obs:changes}, we may assume that only center vertices are relocated.
By choice of $k$, there are at most $\tilde{k}$ copy sets in the solution.
Ignoring the last layer of the edge gadget,
we thus must relocate the $8\tilde{k}^2$ copies of the $n-\tilde{k}$~vertices
whose copy sets are not in the solution.
Furthermore the last layer of the edge gadget needs an extra relocation unless $V_i$ and $V_j$ are part of the solution.
\end{enumerate}

It is easy to see that the reduction is computable in polynomial time. 
We now show the correctness of the two directions of the reduction.

\begin{lemma}\label{lem:W-hardVCforward}
  If $H$ contains a clique of size $\tilde{k}$, then the
\textsc{Global Multistage Vertex Cover} instance
$(\mathcal{G}, k, \ell)$ admits a solution.
\end{lemma}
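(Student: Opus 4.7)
The plan is to explicitly construct a solution $(S_1, \ldots, S_\tau)$ from a $\tilde{k}$-clique $C = \{v_{i_1}, \ldots, v_{i_{\tilde k}}\}$ of $H$ and then verify both the vertex-cover property in each layer and the global insertion budget. Throughout, the ``base'' of every $S_i$ (for $i \ge 2$) will be the union $B := V_{i_1} \cup \cdots \cup V_{i_{\tilde k}}$ of the $\tilde k$~copy sets corresponding to the clique vertices, which has size exactly $4\tilde k^2$. Since $k = 4\tilde k^2 + 1$, this leaves one free slot per layer for a single center vertex whenever needed.

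For the concrete construction, I would set $S_1$ to be any $k$ vertices of the $(k+1)$-clique $V^{\text{first}}$ (this is a vertex cover of that clique and exhausts $k$ slots), and $S_2 := B$, which is already disjoint from~$V^{\text{first}}$. For each subsequent layer $i \ge 3$ lying in the edge gadget $\mathcal G_e$ of an edge $e = \{v_p, v_q\}$ of $H$, do the following: \emph{(i)} in the layer where the center vertex is $w_e^{i'+jn}$ (the star hitting $V_{i'}$), set $S_i := B$ if $v_{i'} \in C$ and $S_i := B \cup \{w_e^{i'+jn}\}$ otherwise; \emph{(ii)} in the last layer of the gadget (star centered at $w_e^*$ hitting $V_p \cup V_q$), set $S_i := B$ if $e \in \binom{C}{2}$ and $S_i := B \cup \{w_e^*\}$ otherwise. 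Checking that each $S_i$ is a vertex cover is immediate from the case distinction: in every layer past the first, every edge is either incident to a copy set fully contained in $B$, or else its center vertex has been placed into~$S_i$. The size bound $|S_i| \le 4\tilde k^2 + 1 = k$ is clear.

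The main bookkeeping task is counting insertions. From $S_1$ to $S_2$ we pay $|S_2 \setminus S_1| = |B| = 4\tilde k^2$ since $B$ and $V^{\text{first}}$ are disjoint. Inside a fixed edge gadget $\mathcal G_e$, the base $B$ is never removed, so the only insertions come from introducing a new center vertex; this happens exactly once per layer in which the layer's associated copy-set vertex is not in $C$. Among the first $8\tilde k^2 n$ layers of the gadget this contributes $8\tilde k^2 (n-\tilde k)$ insertions, and the final layer contributes a single insertion exactly when $e \notin \binom{C}{2}$. Summing over all $m$~edge gadgets yields
\[
  4\tilde k^2 \;+\; m \cdot 8\tilde k^2(n-\tilde k) \;+\; \Bigl(m - \binom{\tilde k}{2}\Bigr) \;=\; \ell,
\]
matching the budget exactly. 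The transitions between consecutive gadgets cost nothing beyond what is already counted, since the base $B$ persists across gadget boundaries and any new center vertex is charged to the first layer of the new gadget under the rule above.

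The argument is essentially mechanical: no real obstacle arises, and the only point that requires care is ensuring that the per-layer insertion count is not double-counted at gadget boundaries (handled above by charging each new center vertex exactly to the layer in which it first appears) and that the one available ``extra slot'' $k - 4\tilde k^2 = 1$ is never exceeded (which is automatic because we add at most one center vertex per layer).
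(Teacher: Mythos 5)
Your construction is essentially identical to the paper's: the same base set $B$ of the $\tilde k$ clique copy sets kept throughout, a single extra slot used for the layer's center vertex exactly when the base does not cover that layer, and the same insertion count $4\tilde k^2 + 8m\tilde k^2(n-\tilde k) + m - \binom{\tilde k}{2} = \ell$. The only blemish is the unconditional assignment $S_2 := B$ (layer~2 is already the first gadget layer and may need its center vertex too), but applying your own rule~(i) there as well fixes this without changing the count.
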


\begin{proof}
  Let $C= \{v_{i_1}, v_{i_2}, \dots, v_{i_{\tilde{k}}}\}\subseteq V(H)$ be a clique in $H$. 
  We construct a solution $\mathcal{S} = (S_1, \dots, S_\tau)$ for $(\mathcal{G}, k, \ell)$ as follows.
  In all layers except the first, the vertex cover contains the copy sets~$V_{i_j}$ for all~$j\in [\tilde{k}]$.
  We denote the set of these vertices by $S := \bigcup_{j\in [\tilde{k}]} V_{i_j}$.
  Thus, we have used $k-1$ vertices in our solution.
  The remaining vertex in the solution, called {\em jumping vertex}, 
  is the vertex $c_i$ in all layers $G_i$ in which $S$ is not a vertex cover (i.e., for every edge gadget, $j\in [8\tilde{k}^2]$ and $i\in [n]\setminus \{i_1, \dots, i_{\tilde{k}}\}$, the $(i + jn)$-th layer of the edge gadget, and the last layer of any edge gadget corresponding to an edge $\{v, w\}$ with $v\notin C$).
  Otherwise, we do not change the solution, i.e., $S_i := S_{i-1}$.
  
  The temporal vertex set $\mathcal{S}$ is indeed a vertex cover of size $k$ in each layer.
  There are~$4\tilde{k}^2$~elocations from the first layer to the second layer.
  By \cref{obs:edge-layer}, 
  the jumping vertex 
  relocates in each $\mathcal G_e$ with $e$ belonging to the 
  clique $8\tilde{k}^2 (n-\tilde{k})$ times, while it relocates $8\tilde{k}^2 (n-\tilde{k}) + 1$ 
  times in all other edge gadgets.
  Thus, the total number of relocations is $4\tilde{k}^2 + 8m\tilde{k}^2 (n-\tilde{k}) + m - \binom{\tilde{k}}{2} = \ell$.
\end{proof}

It remains to show the reverse direction.

\begin{lemma}\label{lem:W-hardVCback}
If the
\textsc{Global Multistage Vertex Cover} instance
($\mathcal{G}, k, \ell)$ admits a solution, then 
$H$ contains a clique of size $\tilde{k}$.
\end{lemma}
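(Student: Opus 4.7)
The plan is to invoke Properties 1--6 established above to put the given solution into a canonical form, then perform a careful accounting of relocations. Assume $(S_1, \ldots, S_\tau)$ is a solution for $(\mathcal{G}, k, \ell)$. By successively applying the modifications in Properties 1--5 (which, as noted, do not destroy earlier properties), we may assume the solution satisfies all of them. In particular, by Property 3, $S_2$ consists of exactly $\tilde{k}$ complete copy sets, say $V_{i_1}, \ldots, V_{i_{\tilde{k}}}$, together with one additional ``jumping vertex''. By Property 5, no further copy sets are added in layers $i \ge 3$; combined with Property 4, the same $\tilde{k}$ copy sets remain in $S_i$ for all $i \ge 2$ (any copy set leaving the solution could not be re-added, so doing so would be strictly wasteful and is eliminated by the modifications of Property 4). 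Let $C := \{v_{i_1}, \ldots, v_{i_{\tilde{k}}}\}$ be the corresponding $\tilde{k}$ vertices of $H$.

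Now I count the total number of relocations. From $S_1$ to $S_2$, we must add all $4\tilde{k}^2$ vertices of the $\tilde{k}$ chosen copy sets (since $S_1 \subseteq V^{\mathrm{first}}$ by Property 1 and the copy sets are disjoint from $V^{\mathrm{first}}$), contributing $4\tilde{k}^2$ relocations. For each of the $m$ edge gadgets, Property 6 provides a lower bound of $8\tilde{k}^2 (n - \tilde{k})$ relocations, with an additional ``extra relocation'' required whenever the gadget's edge $e = \{v_p, v_q\}$ does not satisfy the good case $V_p, V_q \subseteq S$. Since the chosen copy sets are fixed to $\{V_{i_1}, \ldots, V_{i_{\tilde{k}}}\}$, the good case holds exactly for those edges $e \in E(H)$ both of whose endpoints lie in $C$.

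Let $t$ denote the number of edges of $H$ with both endpoints in $C$. Summing the lower bounds, the total relocation count is at least
\[
4\tilde{k}^2 + 8m\tilde{k}^2(n - \tilde{k}) + (m - t).
\]
Since this is at most $\ell = 4\tilde{k}^2 + 8m\tilde{k}^2(n - \tilde{k}) + \bigl(m - \binom{\tilde{k}}{2}\bigr)$, we obtain $t \ge \binom{\tilde{k}}{2}$. But $t \le \binom{\tilde{k}}{2}$ trivially holds since $|C| = \tilde{k}$, so $t = \binom{\tilde{k}}{2}$ and all $\binom{\tilde{k}}{2}$ possible edges among $C$ must be present in $H$. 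Hence $C$ forms a clique of size $\tilde{k}$ in $H$, completing the proof.

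The main obstacle is being confident that Properties 1--5 can indeed all be enforced simultaneously on a given solution without violating the relocation budget; the excerpt asserts this, and the counting in the final paragraph then follows cleanly. A second subtle point is the argument that the $\tilde{k}$ chosen copy sets must be preserved across \emph{all} later layers (not merely that no \emph{new} copy sets are added): this combines Property 5 (no insertions of non-$W$ vertices after layer 2) with the fact that removing a copy set only wastes relocations with no benefit, which is already subsumed by the canonicalization yielding Properties 3--5.
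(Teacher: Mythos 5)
Your proposal is correct and follows essentially the same route as the paper's proof: canonicalize the solution via Properties 1--5 so that exactly $\tilde{k}$ copy sets are fixed from layer~2 onward, then use Property~6 to charge $8\tilde{k}^2(n-\tilde{k})$ relocations per edge gadget plus one extra relocation for every gadget whose edge is not inside the chosen vertex set, forcing at least $\binom{\tilde{k}}{2}$ edges among the $\tilde{k}$ chosen vertices. Your accounting via the quantity $t$ is just a slightly more explicit write-up of the paper's argument.
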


\begin{proof}
Let us consider a solution that adds $\tilde{k}$ copy sets in $S_2\setminus
S_1$ and adds no further copy sets in later layers 
(\Cref{obs:further-layer,obs:changes}). 
Furthermore, the edge gadget causes in total $8m\tilde{k}^2 (n-\tilde{k})$
relocations by \Cref{obs:edge-layer}.
 Thus, $m - \binom{\tilde{k}}{2}$ relocations
remain from the $\ell$ allowed relocations. 
Thus there are $\binom{\tilde{k}}{2}$ edge gadgets for which we
do not pay an extra relocation due to the last layer. To save the extra relocation
for that many edge gadgets, we must have an edge between all pairs of vertices whose 
copy sets are in the solution. To sum up, there is a clique of size
$\tilde{k}$.
\end{proof}

\subsubsection{Path Contraction}\label{sect:hardness_path_contraction}

\subparagraph*{Vertex gadget.}

For each $v_i \in V(H)$, we add a path of length $4 \tilde{k}$, and call the endpoints of the path $s_i$ and $t_i$.
The edges of this path are the \emph{copy set} $V_i$ of $v_i$.

\subparagraph*{Edge gadget.}

An edge gadget for an edge $e = \{v_p, v_q\}$ adds $8 \tilde{k}^2n + 1$ layers.
It adds $16 \tilde{k}^2 n + 4$ vertices $w_e^1$, $w_e^2, \dots, w_e^{8\tilde{k}^2 n}$, $x_e^1, \dots, x_e^{8\tilde{k}^2n}$, $y_e^1, y_e^2$, and $z_e^1 $ and $z_e^2$, where $w_e^r$ and $x_e^{r}$ 
($r\in [8\tilde{k}^2 n]$)
are isolated in all but one layer of the gadget.
For each $i\in [n]$ and $j\in [8\tilde{k}^2]$, the edges $\{w_e^{i + jn}, s_i\}$ and $\{s_i, x_e^{i+jn}\}$ are present in the $(i + jn)$-th layer of the gadget.
In the $(8\tilde{k}^2n + 1)$-th layer, the edges $\{t_p, t_q\}$, $\{s_p, y_e^1 \}$, $\{y_e^1, y_e^2\}$, $\{s_p,z_e^1\}$, and $\{z_e^1, z_e^2\}$ are present.
Let $W$ be the set of edges incident to $w_e^i$, $x_e^i$, $y_e^1$, or $z^1_e$
for some edge $e$ and $i \in [8\tilde{k}^2n]$.
See \Cref{fig:edgeGadgetPC} for an example.

The edges $\{w^{i+jn}, s_i\}$ and $\{\{s_p, y_e^1\}, \{y_e^1, y_e^2\}\}$ now correspond to the center vertices from the reduction for \textsc{Global Multistage Vertex Cover} (meaning that these edges will be added to the solution using relocations for layers not corresponding to edges or vertices of the clique).

\begin{figure}[t!]
    \scalebox{0.95}{
    \begin{tikzpicture}
      \node[vertex, label=180:$s_1$] (v11) at (2,0) {};
      \node[vertex, label=180:$t_1$] (v13) at ($(v11) + (-0, -1)$) {};
      \draw[dotted] (1.75, -0.5) circle (1);
      \node (c1) at (0.5, -0.5) {$V_1$};
      \draw (v11) edge[dotted] (v13);

      \node[vertex, label=180:$s_2$] (v21) at ($(v11) - (0, 3)$) {};
      \node[vertex, label=180:$t_2$] (v23) at ($(v21) + (-0, -1)$) {};
      \draw[dotted] (1.75, -3.5) circle (1);
      \node (c2) at (0.5, -3.5) {$V_2$};
      \draw (v21) edge[dotted] (v23);

      \node[vertex, label=180:$s_3$] (v31) at ($(v21) - (0, 3)$) {};
      \node[vertex, label=180:$t_3$] (v33) at ($(v31) + (0, -1)$) {};
      \draw[dotted] (1.75, -6.5) circle (1);
      \node (c3) at (0.5, -6.5) {$V_3$};

      \draw (v31) edge[dotted] (v33);

      \node[vertex, label=0:$w^1_{\{1,3\}}$] (w1) at (4, -0.) {};
      \node[vertex, label=270:$x^1_{\{1,3\}}$] (x1) at ($(w1) + (0, -1.1)$) {};

      \draw (v11) edge (w1);
      \draw (x1) edge (v11);

      \node[vertex, label=90:$w^2_{\{1,3\}}$] (w2) at (6, -3.5) {};
      \node[vertex, label=270:$x^2_{\{1,3\}}$] (x2) at ($(w2) + (0, -1)$) {};

      \draw (v21) edge (w2);
      \draw (x2) edge (v21);

      \node[vertex, label=90:$w^3_{\{1,3\}}$] (w3) at (8, -6.5) {};
      \node[vertex, label=270:$x^3_{\{1,3\}}$] (x3) at ($(w3) + (0, -1)$) {};

      \draw (v31) edge (w3);
      \draw (x3) edge (v31);

      \node[vertex, label=270:$y_{\{1,3\}}^1$] (ve) at (12, -2.5) {};
      \node[vertex, label=0:$y_{\{1,3\}}^2$] (y2) at (14, -2.5) {};
      \node[vertex, label=270:$z_{\{1,3\}}^1$] (z1) at (12, -4.5) {};
      \node[vertex, label=0:$z_{\{1,3\}}^2$] (z2) at (14, -4.5) {};

      \draw (z2) -- (z1) -- (v11) -- (ve) -- (y2);

      \draw (v33) edge[dashed, bend left=35] (v13);

      \node (t1) at ($(w1) + (0, 1.)$) {$\tau_1$};
      \node (t2) at ($(w2) + (0, 4.5)$) {$\tau_2 $};
      \node (t3) at ($(w3) + (0, 7.5)$) {$\tau_3 $};

      \node (tl) at ($(ve) + (1, 3.5)$) {$\tau_{8\tilde{k}^2 n + 1} $};

      \node (tdots) at ($0.5*(t3) + 0.5*(tl) - (0.5, 1)$) {\dots};

      \node (tdots) at ($(tdots) + (0, -4.5)$) {\Huge\dots};

      \node (time) at (2.5, 1) {Layer:};

      \draw (3, 1.5) edge[dotted] (3, -7.5);
      \draw (5, 1.5) edge[dotted] (5, -7.5);
      \draw (7, 1.5) edge[dotted] (7, -7.5);
      \draw (9, 1.5) edge[dotted] (9, -7.5);
      \draw (11, 1.5) edge[dotted] (11, -7.5);

      \draw (0.5, 0.7) edge (15, 0.7);
    \end{tikzpicture}}

  \caption{An example of the edge gadget for the edge $e=\{v_1, v_3\}$ in a graph with three vertices~$v_1$,~$v_2$, and $v_3$ for $\tilde{k} = 3$.
  The $8 \tilde{k}^2$ repetitions of layers 1 to 3 (with different $x_e^i$ and $w_e^i$) are not drawn to keep the figure readable.
  The dotted edges represent a path of length $4\tilde{k}^2$ (the copy set of the corresponding vertex).
  The edges are only contained in the layers in which the right end vertex
  ($w^i_e$ or~$v_e$) lies, with the exception of the edges inside a copy set, which are contained in all layers.
  The dashed edge is contained only in the last layer ($\tau_{8\tilde{k}^2n +1}$).
  The copy sets are encircled by dotted lines.}
  \label{fig:edgeGadgetPC}
\end{figure}
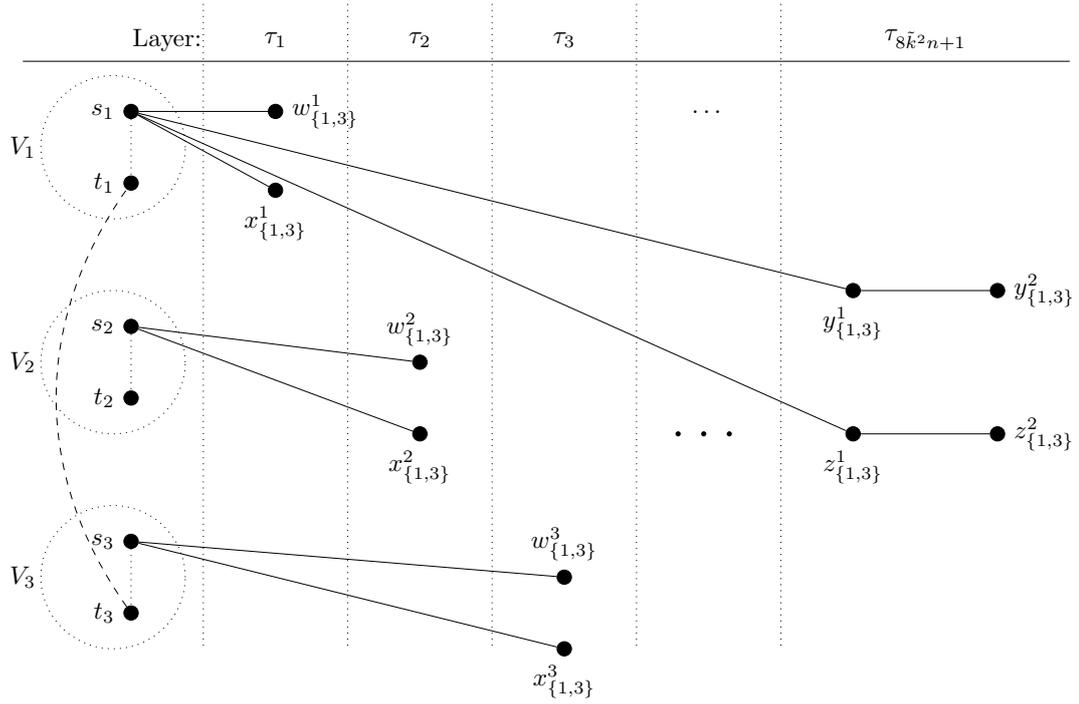

\subparagraph*{Instance.}
Before all edge gadgets, in a first layer we have a clique of size $4\tilde k^2 + 4$, ensuring that the solution in the first layer consists of $4 \tilde k^2 + 2$ edges of these paths.
The vertices of this clique are isolated in all other layers.
Afterwards, there are $m$~edge gadgets, one for each edge of the graph~$H$.
We set $k = 4 \tilde k^2  + 2$ and~$\ell = 4 \tilde k^2 + 8m \tilde k^2 (n - \tilde k) + (2m - \binom{\tilde k}{2})$.

Note that this reduction and the previous reduction are very similar.
\Cref{obs:first-layer}--\Cref{obs:changes} are valid also for the current
reduction (with $|S_i \cap W| \le 2$ in \Cref{obs:isolated} and $|S_1 \cap S_i|\le 2$ in \Cref{obs:completeCS}). Moreover, \Cref{obs:edge-layer} for an edge gadget of edge $\{i,j\}$ is
also 
valid if we 
increment the number of relocations for the good case by one from $8\tilde{k}^2 (n-\tilde{k})$
to $8\tilde{k}^2 (n-\tilde{k})+1$. As before, if we are not in the good
case, then we have one extra relocation since
the last layer cause an extra relocation exactly if one of the
copy sets $V_i$ and $V_j$ is not in the solution set. To see this, note
that the last layer of the gadget forms a star whose center is the endpoint
of three paths.  Two are of length two and one is either very long 
(if $V_i$ or~$V_j$ is not in the solution set) or of length 1.
To find a solution, a shortest path has to be contracted.

It is easy to see that the reduction is computable in polynomial time. 
By replacing ``vertex cover'' through ``path contraction'' and $8\tilde{k}^2 (n-\tilde{k})$
by $8\tilde{k}^2 (n-\tilde{k})+1$ in the proofs of Lemmas~\ref{lem:W-hardVCforward}
and~\ref{lem:W-hardVCback}, we obtain a proof that shows the following:
 $H$ contains a clique of size $\tilde{k}$ exactly if the
\textsc{Global Multistage Path Contraction} instance
$(\mathcal{G}, k, \ell)$ admits a solution.

\subsubsection{Cluster Edge Deletion}\label{sect:ced}

We now show $\wone$-hardness with respect to solution size~$k$ for \textsc{Global Multistage Cluster Edge Deletion}.
Recall that, for a graph $H$, a set $D \subseteq E(H)$ is called a \emph{cluster edge deletion set} if $H-D$ is a cluster graph (i.e., all connected components are cliques).
We again reduce from \textsc{Clique}. Let $(H, \tilde k)$ be an instance of \textsc{Clique}. We construct a temporal graph~$\mathcal G$ as follows.

\subparagraph*{Vertex gadget.}
For each vertex $v_i \in V(H)$, we add a clique of size $4 \tilde k + 1$ containing a special vertex~$u_i$.
The edges adjacent to $u_i$ in this clique are the \emph{copy set} of $v_i$.

\subparagraph*{Edge gadget.}
An edge gadget for an edge $e= \{v_p, v_q\}$ adds $8 \tilde k^2 n + 1$ layers.
It adds~$8\tilde{k}^2 n$~vertices $w_e^1, w_e^2, \dots, w_e^{8\tilde{k}^2n}$, which are all isolated in all but one layer.
In the $(i + jn)$-th ($j\in [8\tilde{k}^2]$) layer of the gadget, vertex $w_e^{i+jn}$ is connected to $u_i$.
In the $(8\tilde{k}^2 n + 1)$-th layer, the edge $\{u_p, u_q\}$ is present.
See \Cref{fig:edgeGadgetCED} for an example.

The edges $\{w_e^{i + jn}, u_i\}$ and $\{u_p, u_q\}$ correspond to the center vertices from the reduction for \textsc{Global Multistage Vertex Cover}.

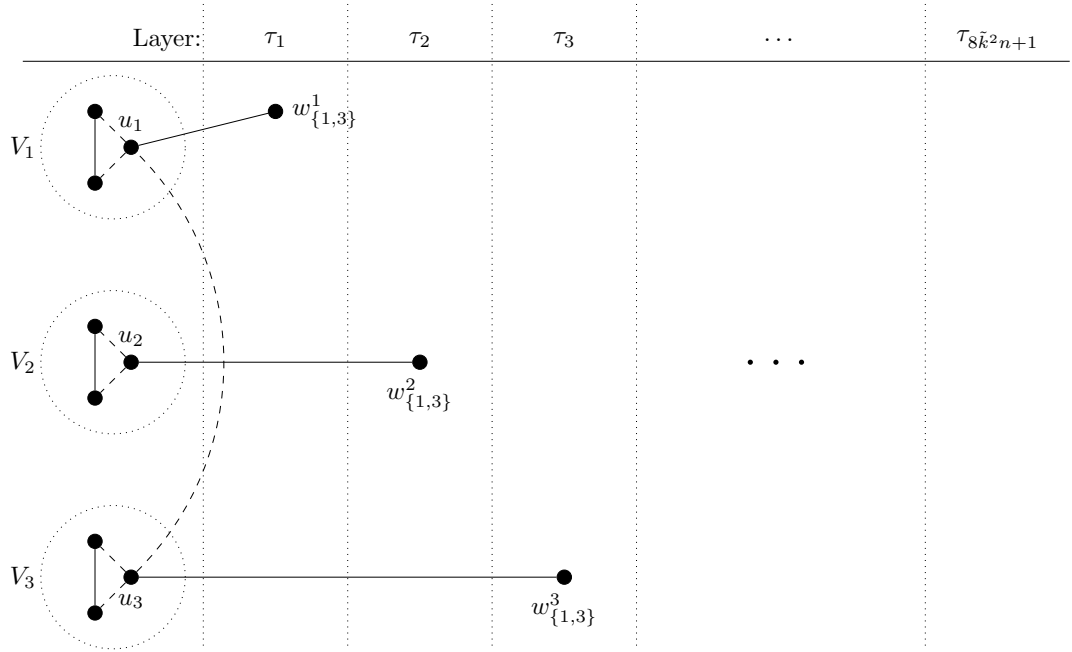
\begin{figure}[t!]
    \scalebox{0.95}{
    \begin{tikzpicture}
      \node[vertex] (v11) at (1.5,0) {};
      \node[vertex] (v12) at ($(v11) - (0, 1)$) {};
      \node[vertex, label=90:$u_1$] (v13) at ($(v11) + (0.5, -0.5)$) {};
      \draw[dotted] (1.75, -0.5) circle (1);
      \node (c1) at (0.5, -0.5) {$V_1$};

      \draw (v11) edge (v12);
      \draw (v13) edge[dashed] (v12) edge[dashed] (v11);

      \node[vertex] (v21) at ($(v11) - (0, 3)$) {};
      \node[vertex] (v22) at ($(v21) - (0, 1)$) {};
      \node[vertex, label=90:$u_2$] (v23) at ($(v21) + (0.5, -0.5)$) {};
      \draw[dotted] (1.75, -3.5) circle (1);
      \node (c2) at (0.5, -3.5) {$V_2$};

      \draw (v21) edge (v22);
      \draw (v23) edge[dashed] (v22) edge[dashed] (v21);

      \node[vertex] (v31) at ($(v21) - (0, 3)$) {};
      \node[vertex] (v32) at ($(v31) - (0, 1)$) {};
      \node[vertex, label=270:$u_3$] (v33) at ($(v31) + (0.5, -0.5)$) {};
      \draw[dotted] (1.75, -6.5) circle (1);
      \node (c3) at (0.5, -6.5) {$V_3$};

      \draw (v31) edge (v32);
      \draw (v33) edge[dashed] (v32) edge[dashed] (v31);

      \node[vertex, label=0:$w^1_{\{1,3\}}$] (w1) at (4, -0.) {};

      \draw (v13) edge (w1);

      \node[vertex, label=270:$w^2_{\{1,3\}}$] (w2) at (6, -3.5) {};

      \draw (v23) edge (w2);

      \node[vertex, label=270:$w^3_{\{1,3\}}$] (w3) at (8, -6.5) {};

      \draw (v33) edge (w3);

      \node (ve) at (14, -3.5) {};

      \draw (v33) edge[dashed, bend right=45] (v13);

      \node (t1) at ($(w1) + (0, 1.)$) {$\tau_1$};
      \node (t2) at ($(w2) + (0, 4.5)$) {$\tau_2$};
      \node (t3) at ($(w3) + (0, 7.5)$) {$\tau_3$};

      \node (tl) at ($(ve) + (0, 4.5)$) {{$\tau_{8\tilde{k}^2 n + 1}$}};

      \node (tdots) at ($0.5*(t3) + 0.5*(tl)$) {\dots};

      \node (tdots) at ($(tdots) + (0, -4.5)$) {\Huge\dots};

      \node (time) at (2.5, 1) {Layer:};

      \draw (3, 1.5) edge[dotted] (3, -7.5);
      \draw (5, 1.5) edge[dotted] (5, -7.5);
      \draw (7, 1.5) edge[dotted] (7, -7.5);
      \draw (9, 1.5) edge[dotted] (9, -7.5);
      \draw (13, 1.5) edge[dotted] (13, -7.5);

      \draw (0.5, 0.7) edge (15, 0.7);
    \end{tikzpicture}}

  \caption{An example of the edge gadget for the edge $e=\{v_1, v_3\}$ in a graph with three vertices~$v_1$,~$v_2$, and $v_3$ for $\tilde{k} = 3$.
  The $8 \tilde{k}^2$ repetitions of layers 1 to 3 (with different center vertices!) are not drawn to keep the figure readable.
  Furthermore, each copy set contains only two instead of twelve edges.
  The edges are only contained in the layer in which the right end vertex
  ($w^i_e$ or~$w^*_e$) lies, except for the long dashed edge, which is only present in layer $8\tilde{k}^2 n + 1$.
  The copy set $V_i$ consists of the short dashed edges contained in the corresponding dotted circle.}
  \label{fig:edgeGadgetCED}
\end{figure}

\subparagraph*{Instance.}
In the first layer of the temporal graph, we have $4 \tilde k^2 + 1$ vertex-disjoint paths of length two, ensuring that the solution in the first layer consists of $4 \tilde k^2 + 1$ edges of these paths.
The vertices of the paths are isolated in all other layers.
Afterwards, there are $m$~edge gadgets, one for each edge of graph $H$.
We set $k = 4 \tilde k^2  + 1$ and~$\ell = 8m \tilde k^2 (n - \tilde k) + 4\tilde k^2 + ( m - \binom{\tilde k}{2})$.

Note that this and the reduction for \textsc{Global Multistage Vertex Cover} are very similar.
\Cref{obs:first-layer}--\Cref{obs:changes} are valid also for the current
reduction.
The proof of \Cref{obs:completeCS} is even simpler, as any solution either contains no edge of a copy set for a vertex~$v_i$, or it separates $u_i$ from the rest of the clique.
This separation can be done without loss of generality by deleting the edges from the copy set.
Moreover, \Cref{obs:edge-layer} is
also
valid for all edge gadgets. As before, if we are not in the good
case, we have one extra relocation since
the last layer causes an extra relocation exactly if one of the
copy sets $V_i$ and $V_j$ is not in the solution set. To see this, note
that the last layer of the gadget forms two cliques connected by the edge $\{u_p, u_q\}$.
To find a solution, the edge $\{u_p, u_q\}$ has to be deleted, or the vertices $u_p$ and $u_q$ have to be separated from the clique they are contained in, which needs at least $4\tilde{k}$ edges and can be done by deleting the copy set of these vertices.

It is easy to see that the reduction is computable in polynomial time.
Replacing ``vertex cover'' by ``cluster edge deletion set'' in the proofs of Lemmas~\ref{lem:W-hardVCforward}
and~\ref{lem:W-hardVCback}, we obtain a proof showing that
 $H$ contains a clique of size $\tilde{k}$ exactly if the
\textsc{Global Multistage Cluster Edge Deletion} instance
$(\mathcal{G}, k, \ell)$ admits a solution.

We remark that the reduction for \textsc{Global Multistage Cluster Edge Deletion} also yields hardness for
\textsc{Global Multistage Cluster Editing}.

%

\subsection{Hardness for the Combined Parameter~$k+\ell$}\label{sec:hardPDS}
The problems investigated in the previous section are all superset-enumerable and, therefore, their global multistage variants admit \fpt-algorithms parameterized by $k + \ell$.
However, for some classical, monotone graph problems no superset-enumeration by solution size is known.
Two notable examples are \textsc{Planar Dominating Set} and \textsc{Planar Edge Dominating Set}, which in their classic (static) variants are known to be fixed-parameter tractable for the 
parameter solution size.
Assuming $\fpt{} \neq \wtwo$, we will that these problems are not superset-enumerable
and do not admit FPT algorithms for the their global multistage versions
by proving that these are both $\wtwo$-hard for parameter $k $ even if $\ell= 0$.
Note that this also implies \wone-/\wtwo-hardness for the classical multistage version of these problems.

\subsubsection{Planar Dominating Set}
Recall that, for a graph $H$, a set $D \subseteq V(H)$ is called a \emph{dominating set}
if $D \cup \bigcup_{d \in D} N(d) = V(H)$.
\begin{problem}[\textsc{Global Multistage Planar Dominating Set}]
        Given a temporal graph $\mathcal G = \langle G_1, \ldots, G_{\tau} \rangle$
        that is planar in every time step
         and parameters $k,\ell$,
        find sets $S_1,\ldots,S_{\tau}$, each of size at most $k$, such that $S_i \subseteq V(G_i)$
        is a dominating set for $G_i$ ($i\in \{1,\ldots,\tau\}$) and the total number of insertions
        $\sum_i \abs{S_{i+1} \setminus S_i}$
        is at most $\ell$.
\end{problem}
\begin{figure}[t]
  \begin{subfigure}[t]{0.4\textwidth}
    \centering
      \begin{tikzpicture}[xscale = 1.5, yscale =0.5, baseline={(0,-0.1)}]
  		\foreach \x in {1,...,5}{%
    		\pgfmathparse{(\x-0.8)*65}
    		\node[vertex, fill=lightgray] (C\x) 
    		   at ($(\pgfmathresult:0.6cm) + (1.2cm, 3.5cm)$) {};
  		} 
  		\foreach \x [count=\xi from 1] in {2,...,5}{%
    		\foreach \y in {\x,...,5}{%
        		\draw (C\xi) -- (C\y);
  			}
		}
      
        \foreach \x in {1,2,3} {
        	\node[vertex] (v\x) at ( 0.2, \x * 1.2) {};
        }
        \node[vertex, left= 0.5cm of v2, label=180:$v_i$] (vi) {};
        \node (h) at ($(vi) + (1cm, 3cm)$) {$H$};
        \foreach \x in {1,2,3} {
        	\draw (vi) -- (v\x);
        }
        
        \node[vertex, fill=lightgray] (n1) at (0.6, 1) {};
        \node[vertex, fill=lightgray] (n4) at (1.2, 2) {};
        \node[vertex, fill=lightgray] (n5) at (1.2, 1) {};
        \node[vertex, fill=lightgray] (n6) at (0.6, 2) {};
        
        \foreach \x/\y in {4/6, 6/5, 6/1, 1/5} {
        	\draw (n\x) -- (n\y);
        }
        
        \draw (v2) -- (v3)
        (v2) -- (C5)
        (v1) -- (n4)
        (v1) -- (n6)
        (v2) -- (n1)
        (v2) -- (C4)
        (v3) -- (C3);
        
    \end{tikzpicture}
  \end{subfigure}\hfill
  \begin{subfigure}[t]{0.6\textwidth}
    \centering
	\begin{tikzpicture}[xscale = 1.5, yscale =0.5]
        \foreach \x in {1,2,4,5,6,9} {
        	\node[vertex, fill=lightgray] (v\x) at (\x * 0.3, 0.2) {};
        }
        \node[right=0 of v9] (dotsV) {$\cdots$};
        \node[vertex, right=0 of dotsV, fill=lightgray] (v10) {};
        
        \foreach \x in {3,7,8} {
        	\node[vertex] (v\x) at (\x * 0.3, 0.2) {};
        }
        
        \node[vertex, below= 0.5cm of v6, label=183:$v_i$] (vi) {};
        \node[vertex, above= 0.5cm of v6, label=170:$w$, fill=lightgray] (w) {};
        \node[vertex, right= 0.5cm of w, label=13:$w'$, fill=lightgray] (w1) {};
        
        \foreach \x in {3,7,8} {
        	\draw (vi) -- (v\x);
        }
        
        \foreach \x in {1,2,...,10} {
        	\draw (w) -- (v\x);
        }
        
        \draw (w) -- (w1);
        
        \node[above=0.5cm of w] (gi) {$G_i$};
        \draw [dashed] ($(gi) - (1.7cm, -0.3cm)$) -- ($(gi) - (1.7cm, 5.5cm)$);
        \draw [dashed] ($(gi) - (-1.7cm, -0.3cm)$) -- ($(gi) - (-1.7cm, 5.5cm)$);
		\node (dotsL) at ($(gi) - (1.9cm, 3cm)$) {$\cdots$};
		\node (dotsR) at ($(gi) - (-1.95cm, 3cm)$) {$\cdots$};
	\end{tikzpicture}
  \end{subfigure}
  \caption{For each vertex $v_i$ in the graph $H$, a graph $G_i$ is constructed where vertex $w$ is connected to all vertices except for $v_i$, and~$v_i$ is connected only to its neighbors in $H$.}\label{fig:examplePDS}
\end{figure}
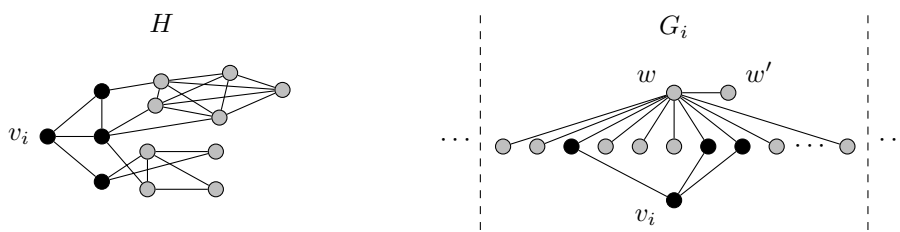
\begin{proposition}
\textsc{Global Multistage Planar Dominating Set} is $\wtwo$-hard when parameterized by the solution size $k$ even when no relocations are allowed $(\ell = 0)$.
\end{proposition}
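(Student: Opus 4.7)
The plan is to give a parameterized reduction from \textsc{Dominating Set}, which is $\wtwo$-hard parameterized by solution size, using exactly the construction depicted in \Cref{fig:examplePDS}. Given an instance $(H, \tilde{k})$ of \textsc{Dominating Set} with $V(H) = \{v_1, \ldots, v_n\}$, I would build a temporal graph $\mathcal{G} = \langle G_1, \ldots, G_n \rangle$ with $\tau = n$ layers on the common vertex set $V(H) \cup \{w, w'\}$. Layer $G_i$ contains the star edges $\{w, v_j\}$ for every $j \neq i$, a pendant edge $\{w, w'\}$, and the ``local'' edges $\{v_i, v_j\}$ for every $j \in N_H(v_i)$. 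The parameters are set to $k := \tilde{k} + 1$ and $\ell := 0$.

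Before turning to correctness I would verify planarity of every layer: place $w$ and its pendant $w'$ above the row $V(H) \setminus \{v_i\}$ and place $v_i$ below that row. Then the star edges incident to $w$ all lie above the row, while the fan of edges incident to $v_i$ lies below it, so no two edges cross. For the forward direction, given a dominating set $D$ of $H$ of size $\tilde{k}$, I take $S_1 = \cdots = S_n := D \cup \{w\}$; this uses no insertions and has size at most $k$, and in each $G_i$ the vertex $w$ dominates $w'$ and every $v_j$ with $j \neq i$, while $v_i$ itself is dominated because $D$ hits $\{v_i\} \cup N_H(v_i)$, which equals $\{v_i\} \cup N_{G_i}(v_i)$.

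For the backward direction I would exploit that $\ell = 0$ forces $S_1 \supseteq S_2 \supseteq \cdots \supseteq S_n$, so every $S_i$ is a subset of $S_1$. In layer $G_i$, the vertex $v_i$ is not adjacent to $w$, hence any vertex that dominates $v_i$ in $G_i$ belongs to $\{v_i\} \cup N_H(v_i)$; since $S_i \subseteq S_1$, the set $S_1 \cap V(H)$ therefore dominates $v_i$ in $H$ for every $i$ and is a dominating set of $H$. Moreover, $w'$ has $w$ as its only neighbor in every layer, so $S_1$ must contain $w$ or $w'$, which yields $|S_1 \cap V(H)| \leq k - 1 = \tilde{k}$.

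The main obstacle is reconciling two requirements that pull in opposite directions: the source instance $H$ may be arbitrarily non-planar, whereas the target problem demands that every layer of $\mathcal{G}$ be planar. The construction resolves this by distributing the neighborhood information of $H$ across the $n$ layers, so that only a single vertex $v_i$ displays its $H$-neighborhood in layer $i$, together with a single global hub $w$ whose role is to make everything except $v_i$ dominated ``for free''. Keeping $\ell = 0$ ensures that the same choice of ``representatives'' must work simultaneously in all layers, which is precisely what forces $S_1 \cap V(H)$ to be a dominating set of $H$.
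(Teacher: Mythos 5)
Your reduction is exactly the paper's: the same temporal graph (hub $w$ with pendant $w'$, layer $G_i$ exposing only $v_i$'s $H$-neighborhood), the same parameter choice $k = \tilde{k}+1$, $\ell = 0$, and the same two-directional argument; your planarity layout and the explicit monotonicity $S_1 \supseteq \cdots \supseteq S_n$ are just slightly more careful write-ups of what the paper asserts. The proof is correct.
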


\begin{proof}
We provide a parameterized reduction from the $\wtwo$-hard \textsc{Dominating Set} problem~\cite{DF13}.
Given a \textsc{Dominating Set} instance $\mathcal{I} = (H, k')$, 
we construct an instance $\mathcal{I}' = (\mathcal G, k, \ell = 0)$ of \textsc{Global Multistage Planar Dominating set} as follows.

Let $w$ and $w'$ be two new vertices.
For each vertex $v_i \in V(H)$ ($i = 1, \ldots, n$) create a 
time layer $G_i$ with
$V(G_i) = V(H) \cup \{w, w'\}$ 
and 
$E(G_i) = \{ \{v_i, u\} |\ \forall u \in N(v_i)\}
\cup \{ \{w, u\}\ |\ \forall u \in V(H) \setminus \{v_i\} \}
\cup \{\{w, w'\} \}$.
For an example see \cref{fig:examplePDS}.
Observe that in every time step $i \in \{ 1, \ldots, n\}$, $G_i$
without $v_i$ forms a star and connecting $v_i$ with some vertices of the
star preserves planarity.

We claim that $H$ contains a dominating set $D$ of size $k$ if and only if 
$\mathcal G$ contains a multistage dominating set $D'$ of size $k+1$.
For the forward direction, choose $D'=D\cup \{w\}$. For every time step $i$, vertex $v_i$ is dominated
in $G_i$
by a vertex in $D$. All other vertices are dominated in $G_i$ by vertex $w$.
For the reverse direction, we have to dominate $w'$ in each time step.
\Wilog{}, we can assume that $w\in D'$. 
Then $D\setminus \{w'\}$ must be a dominating set in $H$ since
every vertex $v_i$ is, at time step $i$, only adjacent to its neighbors from $H$.
\end{proof}

\subsubsection{Planar Edge Dominating Set}
For a graph $H$, $D \subseteq E(H)$ is called an \emph{edge dominating set} if
each edge of $H$ is incident to an edge in $D$.
\textsc{Edge Dominating Set}, the problem of finding a minimum-size edge dominating set, is closely related to \textsc{Vertex Cover} and
\textsc{Matching}.
However, even when restricting \textsc{Edge Dominating Set} to planar 
graphs, within the global multistage setting it appears to be 
computationally harder than \textsc{Vertex Cover} on general graphs, as we will show now.
The restriction of \textsc{Global Multistage Edge Dominating Set}, which we call \textsc{Global Multistage Fully Planar Edge Dominating Set}, is stronger than the one of \textsc{Global Multistage Planar Dominating Set}, as we now require the underlying graph to be planar, while we only required every layer to be planar for \textsc{Global Multistage Planar Dominating Set}.

\newcommand{\gmpeds}{\textsc{Global Multistage Fully Planar Edge Dominating Set}}
\begin{problem}[\gmpeds{}]
        Given a triple $(\Gg,k,\ell)$ where 
        $\mathcal G = \langle G_1, \ldots, G_{\tau} \rangle$ is a temporal graph
		with planar underlying graph $G$,
        the goal is to find sets $S_1, \ldots, S_{\tau}$, each of size at most $k$, such that $S_i \subseteq E(G)$
        is an edge dominating set for $G_i$ ($i\in [\tau]$) and the number of insertions
        $\sum_i \abs{S_{i+1} \setminus S_i}$
        is at most $\ell$.
\end{problem}

\begin{proposition}
	\gmpeds{} is $\wtwo$-hard when parameterized by the solution size $k$ and no relocations are allowed $(\ell = 0)$.
\end{proposition}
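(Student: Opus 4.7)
The plan is to reduce from \textsc{Set Cover} (equivalently, \textsc{Dominating Set}), which is W[2]-hard parameterized by solution size, following the same general template as the previous proposition for \textsc{Global Multistage Planar Dominating Set}, but with the additional requirement that the union of all layers be planar. Given a set system with universe $U=\{u_1,\ldots,u_n\}$, sets $\Ff=\{F_1,\ldots,F_m\}$, and cover budget $k'$, I would design a planar underlying graph $G$ containing $m$ pendant \emph{selector} edges $p_1,\ldots,p_m$ (one per set) attached along a planar spine and present in every layer. Including $p_j$ in the multistage solution $S$ represents choosing $F_j$ for the cover; the target parameters are $k=k'+O(1)$ and $\ell=0$.

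For each element $u_i$, I would add a layer $G_i$ containing a small \emph{forcing gadget} (a constant-size subgraph) designed so that it can be edge-dominated only by selectors $p_j$ with $u_i\in F_j$. In the forward direction, a set cover $C$ maps to the solution $S=\{p_j:F_j\in C\}$ together with a constant-size set of auxiliary edges; each forcing gadget is then dominated because $C$ covers $u_i$. In the backward direction, the gadget in each layer $G_i$ guarantees that $S$ contains a selector for $u_i$, so the sets indexed by $S$ form a cover of size at most $k$. Because $\ell=0$ forces $S_1\supseteq \cdots\supseteq S_\tau$ with $S_i\subseteq E(G_i)$, and hence $S_\tau\subseteq\bigcap_i E(G_i)$, I would make sure that all selectors lie in $\bigcap_i E(G_i)$ and that the forcing gadgets can be dominated without any selector being ``dropped'' from the chain along the way.

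The main obstacle is keeping $G$ planar while the bipartite incidence structure of $(\Ff,U)$ can be non-planar in general. I would circumvent this by exploiting the temporal dimension: each layer's forcing gadget adds only constant-size local structure to $G$, and these local additions can be embedded successively along a caterpillar-like spine with short, local ``wires'' to the relevant selectors, so that the union remains planar. In this way, each element's constraint is encoded locally inside its own layer while the global planarity of $G$ is preserved.

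The hardest step will be designing the forcing gadget so that simultaneously (i) it is planar-compatible, (ii) under the intersection constraint $S\subseteq\bigcap_i E(G_i)$ implied by $\ell=0$ it genuinely enforces ``some selector covering $u_i$'' rather than collapsing the problem to ``some set equals $U$'', and (iii) the size overhead in the solution is additive $O(1)$ per instance rather than per layer. I expect the detailed correctness analysis of this gadget to form the bulk of the proof. If a direct planar encoding proves too restrictive for arbitrary set systems, a natural fallback is to reduce instead from a W[2]-hard variant of \textsc{Set Cover} on a class that already admits a compact planar representation.
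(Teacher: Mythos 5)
Your reduction template is right in spirit (reduce from \textsc{Set Cover}, encode one element constraint per layer, use selector edges present in every layer), but the step you yourself flag as the hardest one---the forcing gadget---does not go through as described, and it is precisely where the proof lives. Edge domination is an incidence relation: for a gadget edge in layer $G_i$ to be dominated by a selector $p_j$, it must share an endpoint with $p_j$ in the underlying graph. So the gadget for element $u_i$ must physically touch an endpoint of every selector $p_j$ with $u_i\in F_j$; ``short local wires'' cannot bridge any distance, because an edge two steps away from a selector is not dominated by it. Taking the union over all layers, the underlying graph then contains (a subdivision of) the set--element incidence graph, which is non-planar for general \textsc{Set Cover} instances---exactly the obstacle you set out to avoid. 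Your fallback, a \wtwo-hard \textsc{Set Cover} variant whose incidence structure is already planar, is not available either: such planar restrictions (e.g.\ \textsc{Red-Blue Dominating Set} on planar graphs) are fixed-parameter tractable.

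The paper's resolution makes the gadget vanish entirely: take the underlying graph to be a single star with center $\star$ and leaves $[n]\cup\Ff$, and let layer $G_i$ contain only the edges $\{\star,i\}$ and $\{\star,F\}$ for each $F\in\Ff$ with $i\in F$. Since all edges of a layer share the endpoint $\star$, a set of edges dominates layer $G_i$ if and only if it contains at least one edge of $G_i$; after normalizing the solution to use only edges of the form $\{\star,F\}$ (any edge $\{\star,i\}$ can be exchanged for such an edge), this condition is literally ``some chosen $F$ contains $i$''. Planarity of the underlying graph is free, $k$ equals the cover budget, and $\ell=0$ is harmless because one fixed edge set works for all layers. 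The lesson is that the incidence structure must be encoded in \emph{which} edges appear in which layer, not in the geometry of the underlying graph; your proposal tries to keep it in the geometry, where planarity genuinely forbids it.
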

\begin{proof}
	We will devise a parameterized reduction from \textsc{Set Cover} here,
	which is well-known to be $\wtwo$-hard~\cite{DF13}.
	A \textsc{Set Cover} instance consists of an integer $n$
	and a family $\Ff \subseteq 2^{[n]}$ of subsets of $[n]$.
	The task is to find a subset $\Ss \subseteq \Ff$ of size at most $\tilde{k}$
	such that $\bigcup \Ss = [n]$.
	We assume \wilog{} that $\bigcup \Ff = [n]$ and $\emptyset \notin \Ff$.
	
	From this, we construct a temporal graph as follows.
	Let $V := \{\star\} \cup [n] \cup \Ff$ be the set of vertices.
	At time step $i \in [n]$, we define the edge set as
	$E(G_i) := \{\{\star, i\}, \{\star, F\} : F \in \Ff \land i \in F\}$.
	For an example see~\cref{fig:cover}.
	Note that the underlying graph $G$ is then a star with center vertex $\star$.
	We claim that $\mathcal{G}$ contains an edge dominating set of size $k$ if and only if
	the \textsc{Set Cover} instance has a solution of size $\tilde{k}$.
	
  \begin{figure}[!t]
    \begin{center}
    	\begin{tikzpicture}
      		\node[vertex, fill=white] (star) at (0, 0) {$\star$};
      		\node[vertex, fill=white] (1) at ($(star) + (-1, 0)$) {$1$};
      		\node[vertex, fill=white] (2) at ($(star) + (1, 0.8)$) {$F_1$};
      		\node[vertex, fill=white] (3) at ($(star) + (1, 0)$) {$F_3$};
      		\node[vertex, fill=white] (4) at ($(star) + (1, -0.8)$) {$F_4$};
			
			\node[] (G1) at ($(star) + (-2, 0)$) {$G_1$};
			
			\draw
				(star) -- (1)
				(star) -- (2)
				(star) -- (3)
				(star) -- (4);
				
      		\node[vertex, fill=white, right= 4 of star] (bstar) {$\star$};
      		\node[vertex, fill=white] (b1) at ($(bstar) + (-1, 0)$) {$2$};
      		\node[vertex, fill=white] (b3) at ($(bstar) + (1, 0)$) {$F_1$};
			
			\node[] (G1) at ($(bstar) + (-2, 0)$) {$G_2$};
			
			\draw
				(bstar) -- (b1)
				(bstar) -- (b3);

      		\node[vertex, fill=white, right= 4 of bstar] (cstar) {$\star$};
      		\node[vertex, fill=white] (c1) at ($(cstar) + (-1, 0)$) {$3$};
      		\node[vertex, fill=white] (c2) at ($(cstar) + (1, 0.8)$) {$F_2$};
      		\node[vertex, fill=white] (c3) at ($(cstar) + (1, 0)$) {$F_3$};
			
			\node[] (G1) at ($(cstar) + (-2, 0)$) {$G_3$};
			
			\draw
				(cstar) -- (c1)
				(cstar) -- (c2)
				(cstar) -- (c3);

      		\node[vertex, fill=white, below= 2 of star] (dstar) {$\star$};
      		\node[vertex, fill=white] (d1) at ($(dstar) + (-1, 0)$) {$4$};
      		\node[vertex, fill=white] (d2) at ($(dstar) + (1, 0.8)$) {$F_2$};
			
			\node[] (G1) at ($(dstar) + (-2, 0)$) {$G_4$};
			
			\draw
				(dstar) -- (d1)
				(dstar) -- (d2);

      		\node[vertex, fill=white, below= 2 of bstar] (estar) {$\star$};
      		\node[vertex, fill=white] (e1) at ($(estar) + (-1, 0)$) {$5$};
      		\node[vertex, fill=white] (e2) at ($(estar) + (1, 0.8)$) {$F_1$};
      		\node[vertex, fill=white] (e3) at ($(estar) + (1, 0)$) {$F_2$};
      		\node[vertex, fill=white] (e4) at ($(estar) + (1, -0.8)$) {$F_4$};
			
			\node[] (G1) at ($(estar) + (-2, 0)$) {$G_5$};
			
			\draw
				(estar) -- (e1)
				(estar) -- (e2)
				(estar) -- (e3)
				(estar) -- (e4);
      \end{tikzpicture}
    \end{center}
    \caption{An example of a temporal graph $\mathcal G = \langle G_1, \ldots, G_5 \rangle$ for the parameterized
    reduction from \textsc{Set Cover} to \textsc{Global Multistage Fully Planar Edge Dominating Set} using the family $\Ff = \{ F_1 = \{1, 2, 5\}, F_2 = \{3, 4, 5\}, F_3 = \{1, 3\}, F_4 = \{1, 5\} \}$. Isolated vertices are not shown.}\label{fig:cover}
  \end{figure}
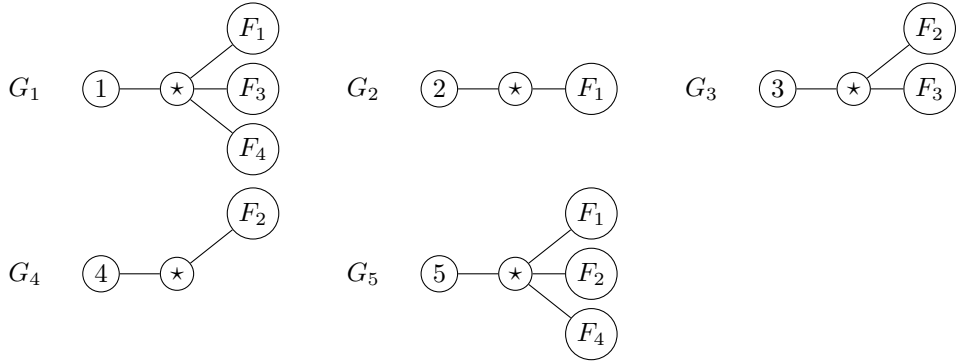

	We may assume that any solution to \gmpeds{} only contains edges of the form $\{\star, F\}$ with $F \in \Ff$
	since any other edge may be exchanged for one of these.
	
	Now the claimed equivalence follows from the fact that $\Ff' \subseteq \Ff$ is a set cover
	if and only if $\{\{\star, F\} \mid F \in \Ff' \}$ is an edge dominating set for each layer of $\Gg$.
\end{proof}

In summary, we have shown that the temporal multistage version of \textsc{Dominating Set} is $\wtwo$-hard for the parameter $k + \ell$ 
even when each layer is a planar graph for \textsc{Dominating Set} or when the underlying graph is a star for \textsc{Edge Dominating Set}.

\section{Parameterized Hardness of Polynomial-Time Solvable Problems Gone Globally Multistage}\label{sect:hardness_of_polytime}

We now present parameterized hardness results for the global multistage versions of several polynomial-time solvable problems.
These hold even if no relocations are allowed ($\ell = 0$) and thus also imply parameterized hardness for the corresponding classical multistage versions of these problems.

\subsection{$s$-$t$-Path}
Contrasting very recent work on \textsc{$s$-$t$-Path} in the 
standard multistage setting~\cite{FNSZ20}, we show that the global 
multistage version of \textsc{$s$-$t$-Path} with no relocations is $\wone$-hard parameterized by the combined parameter solution size~$k$ plus lifetime $\tau$. The problem is formally defined as follows.
\begin{problem}[\textsc{Global Multistage $s$-$t$-Path}]
  Given a temporal graph $\mathcal G = \langle G_1, \ldots, G_{\tau} \rangle$, two vertices $s, t\in V(G)$, and an integer $k$,
  find a set $F\subseteq E(G)$ of at most $k$ edges such that, for every $i\in [\tau]$, there exists an $s$-$t$-path in $G_i' := (V(G_i), E(G_i)\cap F)$.
\end{problem}

\begin{theorem}\label{lem:s-t-path}
  \textsc{Global Multistage $s$-$t$-Path} parameterized by~$k + \tau$ is W[1]-hard, even if~$\ell = 0$.
\end{theorem}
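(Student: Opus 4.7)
The plan is a parameterized reduction from \textsc{Multicolored Clique}, which is $\wone$-hard parameterized by the number of color classes~$\tilde k$. Given an instance $(H, \tilde k)$ with vertex partition $V(H) = V_1 \cup \cdots \cup V_{\tilde k}$, we will construct in polynomial time a temporal graph~$\mathcal G$ with $\tau = \tilde k + \binom{\tilde k}{2}$ layers, two distinguished vertices $s$ and $t$, budget $k := 2\tilde k + 2\binom{\tilde k}{2}$, and $\ell := 0$. Since $k + \tau$ is bounded by a function of $\tilde k$, this establishes the theorem.

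The graph~$\mathcal G$ contains, besides $s$ and $t$, a \emph{selector vertex} $x_{i,v}$ for every $i \in [\tilde k]$ and $v \in V_i$, and a \emph{witness vertex} $y_{i,j,e}$ for every $i<j$ and every edge $e \in E(H)$ with one endpoint in $V_i$ and one in $V_j$. The \emph{vertex layer}~$L_i$ ($i \in [\tilde k]$) contains the edges $\{s, x_{i,v}\}$ and $\{x_{i,v}, t\}$ for every $v \in V_i$, so that its $s$-$t$-paths are exactly the length-$2$ paths through some $V_i$-selector. The \emph{edge layer}~$L_{i,j}$ ($1 \leq i < j \leq \tilde k$) contains, for every $e = \{u,v\} \in E(H)$ with $u \in V_i$ and $v \in V_j$, the four edges $\{s, x_{i,u}\}, \{x_{i,u}, y_{i,j,e}\}, \{y_{i,j,e}, x_{j,v}\}, \{x_{j,v}, t\}$ of a length-$4$ path through $y_{i,j,e}$. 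Crucially, the outer edges of these edge-layer paths coincide with edges that already appear in the relevant vertex layers, so that $F$ can reuse them.

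For the forward direction, given a multicolored clique $\{v_1, \dots, v_{\tilde k}\}$ we take $F$ to consist of the outer edges $\{s, x_{i,v_i}\}$ and $\{x_{i,v_i}, t\}$ for every $i$ ($2\tilde k$ edges) together with the middle edges $\{x_{i,v_i}, y_{i,j,\{v_i,v_j\}}\}$ and $\{y_{i,j,\{v_i,v_j\}}, x_{j,v_j}\}$ for every $i < j$ (another $2\binom{\tilde k}{2}$ edges); this set has exactly $k$ edges and supplies an $s$-$t$-path in every layer. For the converse, assume $F$ is a valid solution and define $A_i := \{v : \{s, x_{i,v}\} \in F\}$ and $B_i := \{v : \{x_{i,v}, t\} \in F\}$. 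A path in~$L_i$ forces $A_i \cap B_i \neq \emptyset$, so $|A_i| + |B_i| \geq 2$. A path in~$L_{i,j}$ must traverse some witness $y_{i,j,e}$, since inside this layer $s$ is only adjacent to $V_i$-selectors, $t$ only to $V_j$-selectors, and each witness has degree exactly~$2$; this contributes at least two middle edges that are unique to layer $(i,j)$. Consequently $|F| \geq \sum_i (|A_i| + |B_i|) + 2\binom{\tilde k}{2}$, and the budget~$|F| \leq k$ forces $|A_i| = |B_i| = 1$, that is, $A_i = B_i = \{u_i\}$ for some $u_i \in V_i$. The path required in~$L_{i,j}$ must then use some $e = \{u,v\} \in E(H)$ with $u \in A_i$ and $v \in B_j$, so $\{u_i, u_j\} \in E(H)$, and $\{u_1, \dots, u_{\tilde k}\}$ is the desired multicolored clique.

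The main obstacle will be making the backward edge-counting fully rigorous, in particular verifying that no $s$-$t$-path in an edge layer can avoid using two ``fresh'' middle edges. This relies on the vertex-disjointness of the witness families $\{y_{i,j,e}\}_e$ across different pairs $(i,j)$ together with the fact that each witness $y_{i,j,e}$ has degree exactly~$2$, so any $s$-$t$-path through it uses both incident edges; paths visiting several witnesses consume only more middle edges, never fewer.
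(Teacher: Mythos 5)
Your proposal is correct and follows essentially the same route as the paper: a reduction from \textsc{Multicolored Clique} with one selection layer per color class and one verification layer per color pair, where the budget forces exactly one vertex per class and one connecting edge per pair. The only difference is that you subdivide each edge of $H$ with a witness vertex (yielding length-$4$ paths and budget $2\tilde k + 2\binom{\tilde k}{2}$), whereas the paper uses the edges of $H$ directly (length-$3$ paths $s$-$u$-$v$-$t$ and budget $2k' + \binom{k'}{2}$); this is a cosmetic variation and your counting argument goes through.
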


\begin{proof}
	We give a parameterized reduction from the W[1]-hard \textsc{Multicolored Clique} problem~\cite{DF13} 
In this problem, a graph $H$ with a partition $(V_1, \dots, V_{k'})$ of $V(H)$ into $k'$ sets is given, and the question is whether $H$ contains a clique containing exactly one vertex from $H_i$ for all $i \in [k']$.

  Let~$(H, k', c)$ be an instance of \textsc{Multicolored Clique}
  where $c: V(H) \rightarrow [k']$ is the coloring determining a $k'$-partition 
   $V_1, \dots, V_{k'}$ of $V(H)$.
  We define a temporal graph $\Gg$ with vertex set $V(\Gg) := V(H) \cup \{s , t\}$ as follows (see also \cref{fig:reachability}).
  For each~$i\in [k']$, the $i$th layer of $\Gg$ contains the edges $\{s, v\}$ and $\{v, t\}$ for each~$v\in V_i$.
  For~$i,j\in [k']$ with $i < j$, we add a layer $L_{i,j}$, which contains the edges $\{s, v\}$ for all~$v\in V_i$, the edges from $E(H[V_i \cup V_j])$, and the edges $\{v, t\}$ for all $v\in V_j$.
  We set $k := 2 k' + \binom{k'}{2}$
  and claim that the resulting \textsc{Global Multistage $s$-$t$-Path} instance $(\Gg, k, 0)$ is solvable if and only if $H$ contains a multicolored $k'$-clique.
  
  $(\Leftarrow)$ Given a clique $\{v_1, \dots , v_{k'}\}$ with $v_i\in V_i$, we construct a solution for $(\Gg, k, 0)$ by taking the edges $\{s, v_i\}$, $\{v_i, t\}$, and $\{v_i, v_j\}$ for all $i, j \in [k']$.
  For $i\in [k']$, the $i$th layer contains the $s$-$t$-path $s$-$v_{i}$-$t$.
  The layer $L_{i,j}$ contains the $s$-$t$-path $s$-$v_i$-$v_j$-$t$.

  $(\Rightarrow)$ Given a solution to $(\Gg, k, 0)$, note that the first $k'$ layers require that for each $i \in [k']$, there is some $v_i\in V_i$ such that $\{s, v_i\}$ and~$\{v_i, t\}$ are contained in the solution.
  The layer $L_{i,j}$ for $i< j$ requires that an edge~$e= \{v_{ij}, w_{ij}\}$ with $v_{ij} \in V_i$ and $w_{ij}\in V_j$ from $E(H[V_i \cup V_j])$ is contained in the solution, and that the edges~$\{s, v_{ij}\}$ and $\{w_{ij}, t\}$ are also contained in the solution.
  Since $k =2k' + \binom{k'}{2}$, no edge but~$\{s, v_i\}$, $\{v_i, t\}$, and $\{v_{ij}, w_{ij}\}$ for $i, j\in [k']$ with $i <j$ is contained in the solution.
  We can conclude that $v_{ij} = v_i$ and $w_{ij} = v_j$ for all $i, j \in [k']$ with $i < j$, implying that $\{v_1, \dots, v_{k'}\}$ is a clique.

  The $\wone$-hardness for \textsc{Global Multistage $s$-$t$-Path} follows directly since $k = 2k' + \binom{k'}{2}$ and $\tau = k' + \binom{k'}{2}$.
\end{proof}

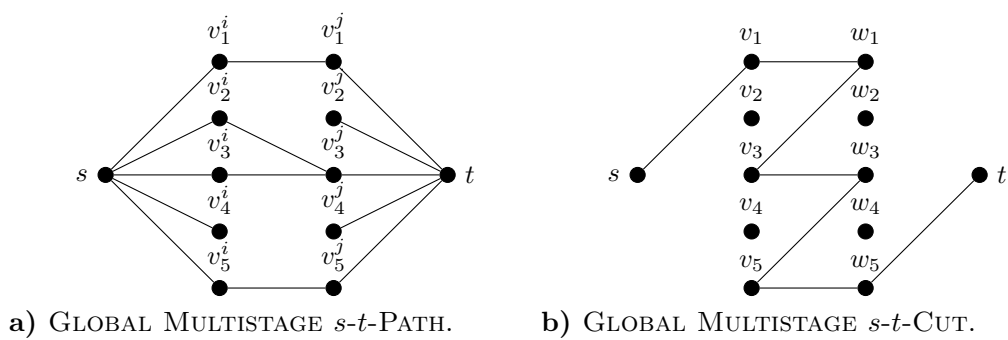
\begin{figure}[t!]
  \begin{subfigure}[t]{0.5\textwidth}
    \centering
	  \begin{tikzpicture}[xscale = 1.5, yscale =0.5]
        \node[vertex, label=180:$s$] (s) at (0, 0) {};
        
        \node[vertex, label=0:$t$] (t) at (3, 0) {};
        
        \node[vertex, label=90:$v_1^i$] (v1) at (1, 3) {};
        \node[vertex, label=90:$v_1^j$] (w1) at (2, 3) {};
        
        \node[vertex, label=90:$v_2^i$] (v2) at (1, 1.5) {};
        \node[vertex, label=90:$v_2^j$] (w2) at (2, 1.5) {};
        
        \node[vertex, label=90:$v_3^i$] (v3) at (1, 0) {};
        \node[vertex, label=90:$v_3^j$] (w3) at (2, 0) {};
        
        \node[vertex, label=90:$v_4^i$] (v4) at (1, -1.5) {};
        \node[vertex, label=90:$v_4^j$] (w4) at (2, -1.5) {};
        
        \node[vertex, label=90:$v_5^i$] (v5) at (1, -3) {};
        \node[vertex, label=90:$v_5^j$] (w5) at (2, -3) {};
        
        \draw (s) -- (v1) -- (w1) -- (t) -- (w3) -- (v3) -- (s) -- (v5) -- (w5) -- (t);
        \draw (s) -- (v2)-- (w3);
        \draw (s) -- (v4);
        \draw (w2) -- (t);
        \draw (w4) -- (t);
      \end{tikzpicture}
    \caption{\textsc{Global Multistage $s$-$t$-Path}.}\label{fig:reachability}
  \end{subfigure}\hfill
  \begin{subfigure}[t]{0.5\textwidth}
    \centering
      \begin{tikzpicture}[xscale = 1.5, yscale =0.5]
        \node[vertex, label=180:$s$] (s) at (0, 0) {};
        
        \node[vertex, label=0:$t$] (t) at (3, 0) {};
        
        \node[vertex, label=90:$v_1$] (v1) at (1, 3) {};
        \node[vertex, label=90:$w_1$] (w1) at (2, 3) {};
        
        \node[vertex, label=90:$v_2$] (v2) at (1, 1.5) {};
        \node[vertex, label=90:$w_2$] (w2) at (2, 1.5) {};
        
        \node[vertex, label=90:$v_3$] (v3) at (1, 0) {};
        \node[vertex, label=90:$w_3$] (w3) at (2, 0) {};
        
        \node[vertex, label=90:$v_4$] (v4) at (1, -1.5) {};
        \node[vertex, label=90:$w_4$] (w4) at (2, -1.5) {};
        
        \node[vertex, label=90:$v_5$] (v5) at (1, -3) {};
        \node[vertex, label=90:$w_5$] (w5) at (2, -3) {};
        
        \draw (s) -- (v1) -- (w1)  -- (v3) -- (w3) -- (v5) -- (w5) -- (t);
      \end{tikzpicture}
    \caption{\textsc{Global Multistage $s$-$t$-Cut}.}
    \label{fig:cut}
  \end{subfigure}
  \caption{
  {\bfseries a)} An example for the layer $L_{ij}$ for the reduction to \textsc{Global Multistage $s$-$t$-path}, where $V_i = \{v_1^i, \dots, v_5^i\}$, $V_j =  \{v_1^j, \dots, v_5^j\}$, and $E(H[V_i \cup V_j]) = \{\{v_1^i, v_1^j\},\{v_2^i, v_3^j\},\{v_3^i, v_3^j\},\{v_5^i, v_5^j\}\}$.
  {\bfseries b)}
  An example for the $j$-th layer of $\mathcal{G}$ for the reduction to \textsc{Global Multistage $s$-$t$-Cut} for a subset~$S_j = \{u_1, u_3, u_5\} \subseteq U$ with $U = \{u_1, u_2, u_3, u_4, u_5\}$ .}\label{fig:example}
\end{figure}

\subsection{$s$-$t$-Cut}
We now consider the problem of separating the vertices $s$~and~$t$ in each layer. We show that the global multistage version of \textsc{$s$-$t$-Cut} with no relocations problem is $\wtwo$-hard parameterized by solution size~$k$.

\begin{problem}[\textsc{Global Multistage $s$-$t$-Cut}]

  Given a temporal graph $\mathcal G = \langle G_1, \ldots, G_{\tau} \rangle$, two vertices $s$ and $t$, and an integer $k$, decide whether 
there exists a set $F\subseteq E(\mathcal{G})$ of at most~$k$ edges such that, for every $i\in [\tau]$, there exists no $s$-$t$-path in $G_i' := (V(G_i), E(G_i)\setminus  F)$.  
The set $F$ is called a \emph{temporal $s$-$t$-cut}.
\end{problem}

We show that this problem is $\wtwo$-hard by a parameterized reduction similar to the one for \textsc{Global Multistage $s$-$t$-Path}.

  \begin{theorem}\label{lem:forward-cut}
      \textsc{Global Multistage $s$-$t$-Cut} parameterized by~$k$ is W[2]-hard, even if~$\ell = 0$.
  \end{theorem}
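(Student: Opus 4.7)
The plan is to prove W[2]-hardness by a parameterized reduction from \textsc{Hitting Set} (equivalent to \textsc{Set Cover}), which is W[2]-hard with respect to the solution size. Given a \textsc{Hitting Set} instance $(U, \Ff, k')$ with $U = \{u_1, \dots, u_n\}$ and $\Ff = \{S_1, \dots, S_m\}$, I would construct the temporal graph on vertex set $V(\Gg) := \{s, t\} \cup \{v_i, w_i : i \in [n]\}$. For each set $S_j \in \Ff$, fix an arbitrary ordering $S_j = \{u_{i_1}, \dots, u_{i_p}\}$ and let the $j$th layer $G_j$ consist of the single $s$-$t$-path $s - v_{i_1} - w_{i_1} - v_{i_2} - w_{i_2} - \cdots - v_{i_p} - w_{i_p} - t$, with all other vertices isolated (as illustrated in the accompanying figure). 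Set $k := k'$ and $\ell := 0$. The construction is polynomial and does not blow up the parameter, so it is a valid parameterized reduction.

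For the forward direction, a hitting set $T \subseteq U$ of size $k'$ yields the temporal $s$-$t$-cut $F := \{\{v_i, w_i\} : u_i \in T\}$: in each layer $G_j$, some element $u_i \in T \cap S_j$ exists, and the edge $\{v_i, w_i\}$ lies on the only $s$-$t$-path of $G_j$, breaking it.

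For the backward direction, given any temporal $s$-$t$-cut $F$ of size at most $k$, I first normalize it to consist only of ``middle'' edges of the form $\{v_i, w_i\}$. If $F$ contains some other edge $e$ (necessarily of the form $\{s, v_i\}$, $\{w_i, v_{i'}\}$, or $\{w_i, t\}$), I replace $e$ by $\{v_i, w_i\}$, where $v_i$ (or $w_i$) is an endpoint of $e$. This does not increase $|F|$, and it preserves the cut property: any layer $G_j$ that was cut by $e$ necessarily has $u_i \in S_j$ (otherwise $e$ would not even be present in $G_j$), and in such a layer $G_j$ the edge $\{v_i, w_i\}$ lies on the unique $s$-$t$-path and hence disconnects $s$ from $t$. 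After repeated normalization, $T := \{u_i : \{v_i, w_i\} \in F\}$ satisfies $|T| \leq k = k'$ and hits every $S_j$, since the path in $G_j$ must be broken by some $\{v_i, w_i\} \in F$, which forces $u_i \in S_j$.

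The main technical subtlety is the normalization step: one must check that inserting the middle edge $\{v_i, w_i\}$ into the cut has no detrimental effect on other layers $G_{j'}$ with $u_i \notin S_{j'}$. But in such layers both $v_i$ and $w_i$ are isolated by construction, so neither the removal of $e$ nor the insertion of $\{v_i, w_i\}$ changes connectivity in $G_{j'}$. Once this is argued, the equivalence is immediate, establishing W[2]-hardness of \textsc{Global Multistage $s$-$t$-Cut} for parameter $k$ even with $\ell = 0$.
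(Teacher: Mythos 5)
Your proposal is correct and matches the paper's proof essentially verbatim: the same reduction from \textsc{Hitting Set} with the per-set path layers $s$-$v_{i_1}$-$w_{i_1}$-$\cdots$-$v_{i_r}$-$w_{i_r}$-$t$, the same choice $k := k'$, and the same normalization of the cut to middle edges $\{v_i, w_i\}$ in the backward direction. The extra care you take in checking that normalization does not affect other layers is a harmless elaboration of the paper's observation that any $s$-$t$-path containing a replaced edge also contains the corresponding middle edge.
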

  \begin{proof}
	  We give a parameterized reduction from the $\wtwo$-hard \textsc{Hitting Set} problem~\cite{DF13}.  
  Here we are given a set~$U$, a set familiy~$\Ss \subseteq 2^U$, and an integer~$k'$ and need to find a subset of at most $k'$~elements of~$U$ which intersects every member of $\Ss$.
  
  Given an instance $\mathcal{I} = (U, \mathcal{S}, k')$ of \textsc{Hitting Set}, we design an instance of \textsc{Global Multistage $s$-$t$-Cut} as follows (see also \cref{fig:cut}).
  The graph $\mathcal{G}$ contains two vertices~$s$ and~$t$.
  Furthermore, for each $u_i\in U$, there are two vertices $v_i$ and $w_i$.  
  For each set~$S_j = \{u_{i_1}, u_{i_2}, \dots, u_{i_r}\}\in \mathcal{S}$, the $j$-th layer contains the $s$-$t$-path $s$-$v_{i_1}$-$w_{i_1}$-$v_{i_2}$-$w_{i_2}$-$v_{i_3}$-$\dots$-$v_{i_r}$-$w_{i_r}$-$t$.
  We set~$k:= k'$.
  We claim that the resulting \textsc{Global Multistage $s$-$t$-Cut} instance $(\Gg, k, 0)$ is equivalent to the given \textsc{Hitting Set} instance $(U, \Ss, k')$.

    $(\Leftarrow)$ Let $X = \{u_{i_1}, u_{i_2}, \dots, u_{i_k}\}$ be a hitting set.
    We define $F:= \{\{v_{i_j}, w_{i_j}\} : j\in [k]\}$, and claim that $F$ separates $s$ and $t$ in each layer.
    
    Consider the $j$-th layer of $\mathcal{G}$.
    This layer contains a unique $s$-$t$-path, which contains the edge $\{v_i, w_i\}$ for each $u_i\in S_j$.
    As $X$ is a hitting set, there is an edge of this path contained in $F$ and, thus, $F$ is a temporal $s$-$t$-cut.

	$(\Rightarrow)$ Let $F$ separate $s$ and $t$ in each layer and let $|F| \leq k$.
    If $F$ contains an edge of the form $\{s, v_i\}$, $\{v_i, w_j\}$ for $j\neq i$, or $\{w_i, t\}$, then we can replace this edge by $\{v_i, w_i\}$, as any $s$-$t$-path containing such an edge contains also the edge $\{v_i, w_i\}$.
    Thus, we assume that $F= \{\{v_i, w_i\} : i\in I\}$ for some $I\subseteq [n]$ with $|I| = k$.
    
    Let $X := \{u_i : i\in I\}$.
    Clearly, $|X|=|I| = k$, so it remains to show that $X$ is a hitting set.
    Consider a set $S_j$.
    Since $F$ is an $s$-$t$-cut in $G_j$, it contains an edge $e = \{v_i, w_i\}$ of the unique $s$-$t$-path in this layer.
    Thus, we have $u_i\in X \cap S_j$ and, therefore, $X$ hits $S_j$.
    
    The $\wtwo$-hardness for \textsc{Global Multistage $s$-$t$-Cut} follows directly since $k = k'$.
  \end{proof}

\subsection{Matching}
Finally, we show that the global multistage version of \textsc{Matching} with no relocations is $\wone$-hard parameterized by solution size~$k$.
This stands in contrast to recent fixed-parameter tractability results 
for a (non-multistage) temporal version of \textsc{Matching} based 
on time windows~\cite{BBR20,MMNZZ20}.
The problem is formally defined as follows.
\begin{problem}[\textsc{Global Multistage Matching}]
  Given a temporal graph $\mathcal G = \langle G_1, \ldots, G_{\tau} \rangle$, and an integer $k$,
  decide whether there exists a set $F\subseteq E(\mathcal{G})$ of at least~$k$ edges such that, for every~$i\in [\tau]$,
  the elements of $E(G_i)\cap F$ are pairwise disjoint.
  The set $F$ is called a \emph{temporal matching}.
\end{problem}

  \begin{figure}[!t]
    \begin{center}
      \begin{tikzpicture}
        \node[vertex, label=180:$c$] (c) at (0, 0) {};

 \foreach \x in {0,60,...,300} {
    }

        \node[vertex, label=0:$v_j$] (vj) at (0:1cm) {};
        \node[vertex, label=72:$v$] (v) at (72:1cm) {};
        
        \node[vertex, label=144:$v'$] (v') at (144:1cm) {};
        \node[vertex, label=216:$w_j$] (wj) at (216:1cm) {};
        
        \node[vertex, label=288:$v''$] (v') at (288:1cm) {};

        \draw (wj) -- (c) -- (vj);
      \end{tikzpicture}

    \end{center}
    \caption{An example for the $i$-th layer of $\mathcal{G}$ for the \textsc{Global Multistage Matching} reduction from \textsc{Independent Set} for an edge~$e_i = \{v_i, w_i\}$ and $V(H) = \{v_i, w_i, v, v', v''\}$.}
    \label{fig:matching}
  \end{figure}
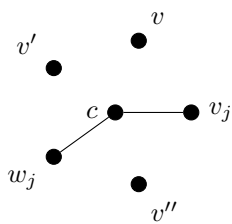

\begin{theorem}\label{lem:forward-matching}
      \textsc{Global Multistage Matching} parameterized by~$k$ is W[1]-hard, even if~$\ell = 0$.
\end{theorem}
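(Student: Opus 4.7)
The plan is to give a parameterized reduction from \textsc{Independent Set}, which is W[1]-hard with respect to the solution size~$\tilde{k}$~\cite{DF13}, to \textsc{Global Multistage Matching} parameterized by~$k$. Given an instance $(H, \tilde{k})$ of \textsc{Independent Set}, I introduce a single new ``center'' vertex~$c$ and set $V(\mathcal{G}) := V(H) \cup \{c\}$. For each edge $e_i = \{v_i, w_i\} \in E(H)$ I add one layer $G_i$ consisting of exactly the two edges $\{c, v_i\}$ and $\{c, w_i\}$ (all other vertices of $V(H)$ being isolated in that layer), as suggested by the figure. Finally, I set $k := \tilde{k}$. This clearly runs in polynomial time and transfers the parameter identically.

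The core observation is that every edge of the underlying graph is incident to~$c$, so any candidate $F \subseteq E(\mathcal{G})$ is uniquely encoded by the vertex set $I_F := \{u \in V(H) : \{c, u\} \in F\}$, with $|F| = |I_F|$. In the layer $G_i$ corresponding to an edge $e_i = \{v_i, w_i\} \in E(H)$, the set $F \cap E(G_i)$ is a matching if and only if at most one of $\{c, v_i\}$ and $\{c, w_i\}$ belongs to~$F$, i.e., $\{v_i, w_i\} \not\subseteq I_F$. Ranging over all layers, this is precisely the condition that $I_F$ is an independent set in~$H$. Hence $F$ is a temporal matching of size at least~$k$ if and only if $I_F$ is an independent set of~$H$ of size at least~$\tilde{k}$, giving the required equivalence.

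Since the stated problem specifies a single edge set~$F$ rather than a per-layer sequence $S_1, \dots, S_\tau$, the restriction $\ell = 0$ is satisfied for free, so the reduction establishes W[1]-hardness of \textsc{Global Multistage Matching} parameterized by~$k$ even for $\ell = 0$. There is no substantive obstacle to the argument; the main conceptual step is simply to notice that when the underlying graph of~$\mathcal{G}$ is a star centered at~$c$, choosing a temporal matching is equivalent to choosing a set of leaves no two of which are incident to~$c$ in the same layer, and that this latter condition is exactly an independence condition in~$H$.
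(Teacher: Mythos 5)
Your reduction is correct and is essentially identical to the paper's: both reduce from \textsc{Independent Set} by adding a center vertex~$c$, creating one layer per edge $\{v_i,w_i\}$ of~$H$ containing exactly the edges $\{c,v_i\}$ and $\{c,w_i\}$, and setting $k := \tilde{k}$, with the same observation that a temporal matching corresponds exactly to an independent set of leaves. No issues.
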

\begin{proof}
	We give a parameterized reduction from the \textsc{Independent Set} problem.
	Given a graph $H$ and an integer $k'$, the problem \textsc{Independent Set} asks whether $H$ contains a set of pairwise non-adjacent vertices of size at least $k'$.
	It is well-known to be $\wone$-hard parameterized by solution size \cite{DF13}.
	
	Given an instance $(H, k')$ of \textsc{Independent Set}, we construct an instance of \textsc{Global Multistage Matching} as follows.
	Let $E(H) =: \{e_1, e_2, \dots, e_m\}$ with $e_i =: \{v_i, w_i\}$.
	The vertices of $\mathcal{G}$ are $V(\mathcal{G}) := V(H) \cup \{c\}$, where $c$ is not contained in $V(H)$.
	The graph~$\mathcal{G}$ has $m$~layers.
	The $i$-th layer of $\mathcal{G}$ contains the edges $\{v_i, c\}$ and $\{w_i, c\}$, see \cref{fig:matching} for an example.
	Finally, we set $k:= k'$.

  We claim that the resulting \textsc{Global Multistage Matching} instance $(\Gg, k, 0)$ is equivalent to the \textsc{Independent Set} instance $(H, k')$.
  
  $(\Leftarrow)$ Let $X$ be an independent set in $H$.
  Define $M:=\{\{v, c\}: v\in X\}$.
  Clearly, $|M| = k$, so it remains to show that $M\cap E(G_i)$ is a matching for all $i\in [\tau]$.
  Every layer containing an edge $\{v_i, c\}$ contains only one other edge, namely $\{w_i, c\}$.
  If $v_i \in X$ ($w_i\in X$ is symmetric), then $w_i\notin X$, as $X$ is an independet set.
  Thus, $F\cap E(G_i) = \{\{v_i, c\}\}$ is a matching.
  If neither $v_i\in X$ nor $w_i\in X$, then $F\cap E(G_i) = \emptyset$ is a matching.

  $(\Rightarrow)$ Let $F\subseteq E(\mathcal{G})$ be a temporal matching of size $k$.
  Define $X:= \{v\in V(H) : \{v, c\} \in F\}$.
  Clearly, $|X| = k$.  
  It remains to show that $X$ is an independent set.
  So assume that $e_i\in E(H[X])$.
  Then~$\{v_i, c\}\in F \cap E(G_i)$ and $\{w_i, c\}\in F \cap E(G_i)$, contradicting the assumption that $F$ is a temporal matching.
  
  Now the $\wone$-hardness for \textsc{Global Multistage Matching} follows directly since $k' = k$.
\end{proof}

\section{Conclusion}\label{sect:conclusion}
We described a general approach to derive fixed-parameter 
tractability results (including polynomial-size problem kernels) for
global multistage versions of classical NP-hard problems.
A particular technical feature herein is showing how to derive \fpt-algorithms for global multistage problems from \fpt-enumeration algorithms for their static counterpart and
``temporal kernels'' from known static ones (more specifically, 
known full kernels~\cite{DAMASCHKE2006337}---a method that to the 
best of our knowledge has not yet been used in the design of algorithms for 
temporal (graph) problems).
Our results are complemented by several parameterized hardness results, 
indicating that the parameter solution size alone does not suffice 
to obtain fixed-parameter tractability results.
Furthermore, we could show that for some problems which do not fit our framework,
the global multistage versions are hard in a parameterized sense,
even though they are polynomial-time solvable in the static case.
We remark that
all our hardness results hold for the variant of the global multistage scenario where the first solution~$S_1$ is given as part of the
input.

As to challenges for future work, we also refer to the open questions exhibited in Table~\ref{tab-results}.
In particular, the parameterized complexity with respect to the combined parameter $\tau +k$
is still open for some of the problems studied. 
{F}rom a more general perspective, exploring connections to 
dynamic problems~\cite{AEFRS15,HN13,KST18,LMNN18} and 
reoptimization~\cite{BBRR18} may be fruitful as well.
Finally, studying (further) temporal problems in the global multistage setting 
may be of general interest in the area of complex network analysis.

\bibliography{main}
\appendix

\end{document}